\newtheorem{lem}{Lemma}
\begin{document}

\copyrightyear{2010}
\preprintfooter{Technical report, University of Vienna}
\title{Work-stealing for mixed-mode parallelism by
deterministic team-building\thanks{The 
research leading to these results has received funding from the 
European Union Seventh Framework Programme (FP7/2007-2013)
under grant agreement no.\ 248481 (PEPPHER Project, \protect\url{www.peppher.eu}).}}
\authorinfo{Martin Wimmer, Jesper Larsson Tr\"aff}
{Department of Scientific Computing, Faculty of Computer Science \\
University of Vienna/Wien\\
Nordbergstrasse 15/3C, A-1090 Wien, Austria}
{\{wimmer,traff\}@par.univie.ac.at}

\maketitle

\begin{abstract}
We show how to extend classical work-stealing to deal also with
\emph{data parallel tasks} that can \emph{require} any number of
threads $r\geq 1$ for their execution. As threads become idle they
attempt to join a \emph{team} of threads designated for a task
requiring $r>1$ threads for its execution. Team building is done
following a deterministic pattern involving $\log p$ possibly
randomized steal attempts where $p$ is the number of started hardware
threads.  Deterministic work-stealing often exhibits good locality
properties that are desirable to preserve.  Threads attempting to join
the team for a task requiring a large team may help smaller teams
instead of waiting for the large team to form. We explain in detail
the so introduced idea of \emph{work-stealing with deterministic
team-building} which in a natural way generalizes classical
work-stealing. The implementation is done with standard lock-free data
structures, in addition to which only a single extra compare-and-swap
(CAS) operation per thread is required as a team is being built. Once
formed, teams can stay to process further tasks requiring the same (or
smaller) number of threads; this requires no further coordination.  In
the \emph{degenerate case}, where all tasks require only a single
thread, the implementation coincides with a (deterministic)
work-stealing implementation, has no extra overhead, and therefore
similar theoretical properties. We demonstrate correctness of the
generalized work-stealing algorithm by arguing for deadlock freedom
and completeness (all tasks will eventually be executed, regardless of
their resource requirement $r\leq p$), discuss its load-balancing,
task execution order and memory-consumption properties, and discuss a
number of algorithmic and implementation variations that can be
considered. A prototype C++ implementation of the generalized
work-stealing algorithm has been given and is briefly
described. Building on this, a serious, well-known contender for a
best \emph{parallel Quicksort} algorithm has been implemented, which
naturally relies on both task and data parallelism.  On an 8-core
Intel Nehalem system, a 16-core AMD Opteron system, a 16-core Sun T2+
system supporting up to 128 hardware threads, and a 32-core Intel
Nehalem EX system we compare our implementation of the published
Quicksort algorithm using fork-join parallelism to a mixed-mode
parallel implementation with a data parallel partitioning step using
our deterministic team-building work-stealer. Results are consistently
better. often by a significant fraction.  For instance, sorting
$2^{27}-1$ randomly generated integers we could improve the speed-up
from 5.1 to 8.7 on the large 32-core Intel system, on this system
being consistently better than the tuned, task-parallel Cilk++ system.
\end{abstract}

\section{Introduction}

Work-stealing is a now classical, efficient strategy for dynamically
scheduling parallel work-loads of independent, sequential tasks on
shared-memory systems with possibly varying number of available
processing resources~\cite{BlumofeLeiserson99,AroraBlumofePlaxton01}.
With work-stealing, the sequential tasks of a DAG-structured
computation are executed by the available independent hardware
threads. Ready (and newly spawned) tasks are kept in local queues, and
only when a thread locally runs out of tasks does it attempt to steal
work (tasks) from other threads. Despite its localized nature with no
global synchronization, it is nevertheless often possible to prove
good time bounds and thread and memory/cache utilization for
work-stealing based schedulers~\cite{BlumofeLeiserson99,AroraBlumofePlaxton01}.
Work-stealing is used as the basis in
Cilk~\cite{BlumofeJoergKuszmaulLeisersonRandallZhou96}, Intel's
TBB~\cite{KukanovVoss07}, and many other task-parallel programming
systems.  Efficient implementation of work-stealing relies heavily on
lock- and/or wait-free data structures~\cite{HerlihyShavit08}.

In the dynamic task-based programming models that fit well with
work-stealing, data-parallel loops are typically handled by
recursively breaking the loop into chunks that are then handled
sequentially by the available hardware threads. Work-stealing provides
no means of ensuring simultaneous scheduling of the tasks responsible
for such pieces, and no control over where (and when) the pieces are
eventually executed. Thus, data-parallel tasks with dependencies are
not well suited to work-stealing. This limitation has often been
addressed and frameworks which allow communicating tasks have been
proposed, see e.g.~\cite{Dummler07}. Phasers as known from
Habanero~\cite{Shirako08} allow loose synchronization of
single-threaded tasks.  Conversely, sometimes parallel algorithms
(e.g.\ the Quicksort algorithm that will be explained in
Section~\ref{sec:quicksort}) are conveniently formulated as a sequence
of parallel (recursive) steps followed by a bunch of independent
sequential work (perhaps followed by merging of results again done in
parallel). Such computations are likewise not easily executed by
work-stealing schedulers.

Mapping data-parallel algorithms to task-based programming models has
drawbacks, which can be resolved by mixing task- and data-parallel
programming. Algorithms naturally requiring task and data parallelism
and the benefit that can be expected from \emph{mixed data and task
parallel programs} are discussed
in~\cite{ChakrabartiDemmelYelick97}. Centralized scheduling methods
for handling \emph{mixed data and task parallel programs} were
discussed for instance already in~\cite{ChakrabartiDemmelYelick97},
see
also~\cite{BoudetDesprezSuter03,CrowlCrovellaLeBlancScott94,DesprezSuter04,Dutot09,RadulescuGemund01,Rauber99,Rauber07}.

The model we consider here is DAG-structured computations with
dynamically spawned, non-malleable tasks with fixed thread
requirements, that is, each new task must be executed by some number
of threads determined at spawn time. The problem here is how to gather
the threads that will eventually execute data-parallel tasks requiring
more than a single thread, avoid unnecessary idle times in the
process, make sure that such gathered threads can be activated
together, and that a convenient virtual numbering of the threads is
available, such that the co-scheduled tasks have a means of
identifying and communicating with each other.

A dynamic, greedy approach like work-stealing might be able to
circumvent or alleviate some of these problems, and still provide a
(provably) efficient utilization of resources.  In this paper we propose
to extend classical work-stealing in this direction. As far
as we are aware, such a generalization of work-stealing has not been
given before.

This model does at first glance not fit naturally with work-stealing
in which coordination is done locally by thieves running out of work
and no centralized resources are available for co-scheduling data
parallel tasks over some set of available hardware threads. The
contribution of this paper is to give a natural extension of
work-stealing that allows for execution of such mixed data and task
parallel programs. The extension is called \emph{deterministic team
building}. When a thread runs out of tasks in its local queue, it
tries to help other threads to execute a data-parallel task requiring
more than a single thread for its execution thereby forming a
\emph{team}. Coordination is thus, like in work-stealing, done by the
\emph{thieves} and not coordinated from ``the top'' by the threads
having the data-parallel tasks in their local queues. The overhead for
forming a new team is a single extra atomic compare-and-swap (CAS)
instruction per thread joining a team. In order to avoid idle times of
threads waiting for large teams to form, threads wanting to join a
large team can help threads with tasks requiring fewer threads. In
order to ensure that sufficiently large teams can be formed fast, and
that a consecutive thread numbering within teams can be computed,
work-stealing and team-building is done following a deterministic
pattern. Each thread becoming idle and wanting to steal has $\log p$
unique partners ($p$ being the number of hardware threads) from which
it attempts to steal work respectively build teams.

This paper concentrates on presenting the algorithmic idea of
work-stealing with deterministic team-building. A basic implementation
has been given in C++, and we have used this to give a very natural
implementation of a parallel Quicksort
algorithm~\cite{TsigasZhang03} with exactly the properties of having
dependent, data parallel computations (of decreasing granularities)
mixed with sequential sorting of smaller chunks of the input.

The work described here was partly motivated by the European FP7
project PEPPHER (for ``PErformance Portability and Programmability for
Heterogeneous many-core aRchitectures'', see \url{www.peppher.eu})
that develops a framework for enhancing performance portability of
applications that consist of component-tasks that may already have
been parallelized and make explicit requirements for specific
(processor) resources (with complimentary guarantees of staying within
the requested limits). Such component-tasks are in this context typically
non-malleable. Among other issues, PEPPHER investigates scheduling
strategies and software for such situations.

\section{Standard Work-stealing}

To set the stage for the description of \emph{deterministic team-building}, 
the standard work-stealing framework upon which our algorithm is built is
shown as Algorithms~\ref{alg:mainloop}, \ref{alg:gettask}, \ref{alg:stealtask}
and~\ref{alg:popappend}.

From now on the number of hardware threads is denoted by $p$.
Individual threads maintain tasks in local, double-ended queues
denoted by $Q$ with the usual operations popBottom(), pushBottom(),
popTop() and isEmpty(). These queues are assumed to be implemented in
a lock/wait-free
manner~\cite{AroraBlumofePlaxton01,HerlihyShavit08}. The main loop
(Algorithm~\ref{alg:mainloop}) terminates when all local queues are
empty and no tasks are running. Termination detection details are not
shown. Tasks are run by invoking their run() method.  Running tasks
can spawn new tasks, and are responsible for putting these onto the bottom of
the local queues by corresponding pushBottom() operations.

\begin{algorithm}
\begin{algorithmic}[1]
\caption{Basic local work-stealing loop}\label{alg:mainloop}
\WHILE {(task $\gets$ getTask()) $\neq\perp$}
\STATE\COMMENT{Get a new task and run it (eventually spawning new tasks in the process)!}
	\STATE task.run()
\ENDWHILE
\end{algorithmic}
\end{algorithm}

\begin{algorithm}
\begin{algorithmic}[1]
\caption{The getTask() procedure}\label{alg:gettask}
\REPEAT
\IF[Local queue empty]{$Q$.isEmpty()}
	\STATE stealTasks()
\ENDIF
        \STATE task $\gets$ $Q$.popBottom()
	\COMMENT{A fresh task or $\perp$ if stolen by other thread}
\UNTIL{task$\neq\perp$}
\RETURN task
\end{algorithmic}
\end{algorithm}

Procedure getTask() (Algorithm~\ref{alg:gettask}) returns a task from the 
bottom of the local queue, or steals tasks from some other thread if the local
queue is empty.

\begin{algorithm}
\begin{algorithmic}[1]
\caption{The stealTasks() procedure}\label{alg:stealtask}
\STATE $v \gets$ random() $\bmod p$
\COMMENT{Choose random victim}
\STATE $T\gets \min(v.Q.\mbox{size}()/2,\mbox{MAX\_STEAL})$)
\COMMENT{Attempt to transfer $T$ tasks from top of $v.Q$ to local $Q$}
\IF[At least one task stolen]{$Q$.popappend($v$,$T$) $>$ 0} 
	\STATE \COMMENT {Number of successfully stolen tasks returned}
	\RETURN
\ENDIF
\STATE\COMMENT{Unsuccessful stealing}
\STATE backoff()
\end{algorithmic}
\end{algorithm}

Tasks are stolen from a \emph{victim thread} by the stealTasks()
procedure (Algorithm~\ref{alg:stealtask}).  Instead of stealing only
one task, it is most often beneficial to steal some fraction of the
tasks of the victim's queue. This is implemented by the popappend()
procedure, which balances thief's and victim's queue by stealing half
the victim's tasks. For simplicity this is implemented by repeated
application of popTop() and pushBottom() operations. The number of
synchronization operations could be reduced by the use of more
complex, real bulk remove and append primitives. The number of tasks
to steal is a typical, tunable parameter in work-stealing schedulers
that can often significantly affect performance. In practice, the last
stolen task should not be added onto the queue in order to ensure it
cannot be stolen back. We omitted this from our algorithms for
readability reasons.  If stealing is unsuccessful the thief performs a
backoff(), the details of which can likewise affect performance (see
Section~\ref{sec:implementation}).

\begin{algorithm}
\caption{The popappend($v$,$T$) method implemented by standard queue 
operations.}\label{alg:popappend}
\begin{algorithmic}[1]
\STATE $i\gets 0$
\WHILE {$i<T$}
	\STATE task $\gets$ $v.Q$.popTop()
	\IF {task $\neq\perp$}
		\STATE $Q$.pushBottom(task)
	\ELSE
		\RETURN $i$
	\ENDIF
\STATE $i\gets i+1$
\ENDWHILE
\RETURN $T$
\end{algorithmic}
\end{algorithm}

\section{Work-stealing with deterministic team-building}

We can now extend work-stealing to cater also for mixed data and task
parallelism. In this case each newly spawned task can require a
certain, determined number of threads for its execution. This thread
requirement is denoted by $r$. In the standard work-stealing setting
$r=1$ for all tasks, whereas we want to allow for any $1\leq r\leq p$
number of required threads (requirements $r>p$ are of course
infeasible). Thread requirements are fulfilled by building
\emph{teams} of threads for tasks with $r>1$. When a team of $r$
threads has been formed for some task, the task can be executed. For
applications it is most often important that the threads of the team are
numbered consecutively, in order that a thread can identify and communicate
with the other tasks of the team. 

In the following we first present \emph{deterministic team-building}
for the case where the number of initially available threads is
a power of two, and the number of required threads for each
newly spawned task is also a power of two. We can present the
algorithm as an extension to the standard work-stealing implementation
of the previous section by appropriately modifying the procedures for
getting and stealing tasks.  As will be explained later, if thread $i$
has a task requiring a team of $r>1$ threads for execution, the team
that will eventually be built will consist of consecutively numbered
hardware threads $kr,kr+1,\ldots,i,\ldots (k+1)r-1$ for some $k$ in
the range $0 \leq k < p/r$. From this a virtual numbering of the
threads in the team from $0$ to $r-1$ can easily be computed.  We then
discuss main properties of the idea as compared to standard
work-stealing, and then finally show how the technique generalizes to
both arbitrary thread requirements and number of hardware threads.

In addition to the thread local queues $Q$ that will still be used to
hold tasks to be executed, each now with a thread requirement $r\geq
1$, each thread has a local, fixed (integer) id $I$, which is used to
deterministically determine the partner for work-stealing and
team-building attempts. For each of $\log p$ partners the id of
partner $i$, $0\leq i<\log p$ is determined by flipping the $i$th bit
of $I$. To access data structures associated with threads we need 
an array ThreadRef[] that maps thread id's into references
to the corresponding hardware threads.

Teams are coordinated by a \emph{coordinator} and threads locally
maintain a reference $c$ to their coordinator, which they use to poll
whether a task is ready to be executed, or if their registration at
the coordinator has been revoked.

The data-structure for each thread has the following members, which
may be accessed by other threads during the stealing and coordination
phases:
\begin{itemize}
\item
A unique id $I$ in the range $0 \leq I < p$.
\item
A double ended queue $Q$ containing
tasks. Local accesses always happen at the bottom, while stealing is
done from the top of the queue.
\item
A reference to the coordinator $c$ of the thread. If the thread is 
itself a coordinator, it contains a self-reference. This is always the
case when scheduling tasks with $r=1$.
\item
A registration structure $R$ that will be described below
is used for team creation at coordinating threads.
\item
A reference to a ready task that can be executed
by the current team, stored in the $c.\mbox{task}$ field. As soon as this
field is nonempty, all threads in the team are allowed to start
execution.
\item
A countdown $G$ for the ready task is provided, and is initialized to
$r-1$ ($r$ is the number of threads required for this task). Each
non-coordinator thread has to atomically decrement this field when
execution starts. As soon as the field is zero, the coordinator can be
sure that execution has started by all threads in the team, and is
then allowed to reset the $c.\mbox{task}$ field.
\end{itemize}

Each thread maintains a registration structure $R$ that is modified by a
\emph{compare-and-swap} (CAS) operation when necessary. The registration
structure is used for keeping track of a team being built for a task
currently at the bottom of the threads queue, and contains the following 
fields:

\begin{itemize}
\item 
The number of \emph{required} threads $r$ for the task at the bottom
of the queue. This is modified every time a new task is added to the
bottom of the local queue.
\item
The number of \emph{acquired} (or \emph{registered}) threads $a$,
which is the number of threads currently registered for the team. Only
threads can be registered that are required for a team of size $r$ (a
team of a certain size at a specific coordinator always consists of
the same threads due to the deterministic construction of the team, as
will be explained in the following). If a new task is added to bottom
that requires more threads, this number can stay. If it requires less
threads, we have to reset it to the number of teamed threads and
increment the new counter $N$ (see below) to ensure that no invalid
thread has registered.
\item 
The number of \emph{teamed} threads $t$ which is the number of threads
currently teamed up to work on a task. Teamed up threads are not
allowed to do any coordination work, except polling the coordinator
for work. A team is formed by the coordinator at task launch time, as
soon as all threads have registered. After execution of a task, the
coordinator may decide to either execute another task using the same
team, execute a task requiring less threads using a part of the team
(thereby freeing all other threads in the team), or disbanding a
team. In case a larger team has to be created for the next task, the
team must be disbanded, and team-building for the new task restarted.
\item A \emph{new counter} $N$ which is incremented every time the
coordinator decides to reset the number of acquired threads to the
current team size, to signal to all acquired threads that
team-building has to start over again. This happens every time a new
task with thread requirement $r'$ is pushed to the bottom of the
queue, where the previous task on the bottom requires $r>t$ threads
and the new requirement $r'$ is smaller than $r$, $r'<r$.  This is
needed to ensure that only consecutive threads can register for a
task. Each registered thread locally stores the current counter during
registration to be able to determine, whether the registration is
still active.
\end{itemize}

The full registration structure can be packed into a 64-bit integer,
and thus all fields updated by a single 64-bit CAS instruction by
assigning 16 bits to each field. For smaller numbers of hardware
threads even a 32-bit CAS suffices.

Except for the CAS required for modifying the registration structure,
and the atomic decrement required for updating the countdown $G$ for
the ready task, all other fields at the coordinator structure are only
written by the coordinator itself and therefore do not require atomic
primitives.

When a task spawns a task, the new task is pushed to the local queue
with pushBottom(). In addition, the registration structure is modified
depending on the number of threads required by the new task. If the
new number of required tasks $r$ is larger than the previous value, we
can just update this value. In case it is smaller, we have to reset
the number of acquired threads $a$ to the current team size $t$ to
ensure that we have not acquired any threads outside the boundaries of
the new task. In addition to that, we have to increment the
registration counter to notify all threads outside the current team
that they have to re-register. We do not allow for $r$ dropping below
$t$, so if the new task requires less than $t$ threads, we set $r
\gets t$.

Initially, each thread starts out with the coordinator reference
pointing to itself, $c=\mbox{ThreadRef}[I]$.

The modified getTask() procedure is shown as
Algorithm~\ref{alg:newgettask}.  If the thread has a coordinator $c
\neq \mbox{ThreadRef}[I]$ (set by previous stealTasks() attempts), it
will either start executing the coordinator's task if there is one
ready (team has been built), or help coordinating the task by polling
its partners (this is explained separately).  If the thread's local queue
is empty a modified steal attempt is executed (see
Algorithm~\ref{alg:newstealtask}).  Otherwise the thread starts
coordinating a task.

\begin{algorithm}
\caption{The modified getTask() procedure.}\label{alg:newgettask}
\begin{algorithmic}[1]
\WHILE {$G > 0$}
	\STATE \COMMENT {Make sure we do not have a previous coordinated task that has not yet been started by all threads.}
	\STATE backoff()
\ENDWHILE
\STATE task $\gets\perp$ 
\REPEAT
\IF {$c\neq \mbox{ThreadRef}[I]$}
	\STATE \COMMENT {This thread is in a team coordinated by another thread}
	\IF {$c$.task $\neq\perp$}
		\STATE \COMMENT {The coordinators task is ready, and this thread is in the team}
		\RETURN $c$.task
	\ELSE
		\STATE pollPartners($c.I$, $c.R.r$)
	\ENDIF
\ELSIF {$Q$.isEmpty()}
	\STATE stealTasks()
\ELSE
	\STATE coordinateTask()
	\IF {task $=\perp$}
		\STATE backoff()
	\ENDIF
\ENDIF	
\UNTIL{task$\neq\perp$}
\end{algorithmic}
\end{algorithm}

Algorithm~\ref{alg:coordinate} lists the coordinateTask()
procedure.  It is called by the coordinating thread, and checks
whether execution of a task can start. If so, it signals that
execution can be started by setting $t$ to $r$ and storing the
currentTask into \emph{task} and starts execution.

\begin{algorithm}
\caption{The coordinateTask() procedure}\label{alg:coordinate}
\begin{algorithmic}[1]
\REPEAT
\STATE $RR \gets R$
\IF {$RR.r = RR.a$}
	\STATE \COMMENT {Enough threads have registered, attempt to fix the team}
	\STATE $RR'\gets RR$
	\STATE $RR'.t \gets RR.r$
	\IF {CAS($R$, $RR$, $RR'$)}
		\STATE \COMMENT {Team built! Start execution}
		\STATE $G \gets RR'.r-1$
		\STATE task $\gets$ $Q$.popBottom()
		\STATE $R.r\gets$ max($R.t$, $Q.\mbox{bottom}.r$)
	\ELSE
		\STATE backoff()
	\ENDIF
\ELSE
	\STATE pollPartners($I$, $R.r$)
\ENDIF
\UNTIL{task$\neq\perp$}
\end{algorithmic}
\end{algorithm}

The modified stealTasks() procedure is shown as
Algorithm~\ref{alg:newstealtask}.  It is called by a thread when its
queue has run empty. It tries to either find a coordinated task to
work on, or to steal tasks.  Partners are checked by flipping the bits
of the thread id from least to most significant, taking at most $\log
p$ iterations. This is done by bitwise exclusive or (denoted by
$\oplus$) with $2^{\ell}$ for $\ell=0$ to $\log p-1$.

\begin{algorithm}
\caption{Modified stealTasks()}\label{alg:newstealtask}
\begin{algorithmic}[1]
\STATE $\ell\gets 0$
\WHILE {$2^{\ell}<p$ }
	\STATE $x \gets \mbox{ThreadRef}[I \oplus 2^{\ell}]$ \COMMENT{Deterministic partner}
	\STATE $xc \gets x.c$ \COMMENT{The partner's coordinator}
	\STATE $xcR \gets xc.R$ \COMMENT{Copy  coordinators registration structure}
		
	\IF[Partner's coordinator requires this thread for execution of its task]{$xcR.r \geq 2^{\ell+1}$ }
		\STATE $RR \gets xcR$
		\STATE $RR.a \gets RR.a+1$ \COMMENT{Coordinator has
		acquired one more thread}
		\IF{CAS($x.R$, $xcR$, $RR$)}
			\STATE \COMMENT{Successful registration with new coordinator}
			\STATE $c \gets xc$ \COMMENT{Partner is new coordinator}
			\STATE coordinatorCounter $\gets RR.N$ 
			\RETURN
		\ENDIF
	\ELSE
	\STATE\COMMENT{Steal from partner instead}
		\STATE $T \gets\min(x.Q.\mbox{size}() / 2, 2^{\ell})$
		\IF {$Q.\mbox{popappend}(x.Q, T)>0$ } 
			\STATE \COMMENT {At least one task stolen}
			\RETURN
		\ENDIF
		\STATE\COMMENT{Nothing to steal, next partner}
		\STATE $\ell\gets\ell+1$
	\ENDIF
\ENDWHILE
\STATE \COMMENT {No success in stealing procedure}
\STATE backoff()
\end{algorithmic}
\end{algorithm}

The pollPartners($c$,$r$) procedure shown as Algorithm~\ref{alg:poll}
is the new polling method. It takes two parameters: A coordinator $c$
and the number of required threads $r$. It polls all partners required
for the execution of a task. If the partners are working on tasks that
are smaller than the current task to execute, we steal tasks from them
to help them complete faster and start looking for partners. If a
partner is also working on a large task (with large, we mean that it
requires many threads for execution), we have to make sure that
exactly one of them wins. In our case, we deterministically choose the
task with a smaller thread requirement $r$. If both tasks are of the
same size, the task of the thread with the smaller id wins. Although
it might be more intuitive to prefer larger tasks to smaller tasks, we
may not do this as threads are only guaranteed to find larger tasks as
soon as they run out of smaller tasks.

There might be better ways of breaking ties, e.g.\ based on size of
local queues, which might improve performance, and this is subject to
experimentation. It is easy to see that the chosen criterion is
correct.

\begin{algorithm}
\caption{pollPartners($c$,$r$)}\label{alg:poll}
\begin{algorithmic}[1]
\STATE $\ell\gets 0$
\WHILE {$2^{\ell}<r$}
	\STATE $x \gets \mbox{ThreadRef}[I \oplus 2^{\ell}]$ \COMMENT{Deterministic partner}
	\STATE $xc \gets x.c$ \COMMENT{Copy partners coordinator pointer}
	\STATE $xcR \gets xc.R$ \COMMENT{Copy partners registration structure}
	
	\IF {$xc.I\neq c.I$ }
		\IF {$xcR.r=r$}
			\IF {$xc.I<c.I$}
				\STATE \COMMENT {The partner's task wins. Switch to its task}
				\IF {task $\neq$ $\perp$}
					\STATE $Q$.pushBottom(task)
					\STATE task $\gets\perp$
				\ENDIF
				\STATE switchToCoordinator($xc$)
				\RETURN
			\ENDIF
			\STATE \COMMENT {We win, partner will eventually register for our task}
		\ELSIF {$xR.r<r$}
			\STATE \COMMENT {The partner's task wins.}
			\IF {$xR.r<2^{\ell+1}$}
				\STATE \COMMENT {The partner doesn't require help from this thread. We steal some tasks to make sure it's queues are empty sooner.}
				\IF {$x.Q.\mbox{size}() > 0$}
					\IF {task $\neq$ $\perp$}
						\STATE $Q$.pushBottom(task)
						\STATE task $\gets\perp$
					\ENDIF
					\STATE $T \gets\min(x.Q.\mbox{size}() / 2, 2^{\ell})$				
					\STATE $Q.\mbox{popappend}(x.Q, T)$
					\RETURN
				\ENDIF
			\ELSE
				\STATE \COMMENT {The partner requires help from this thread. We switch to its task}
				\IF {task $\neq$ $\perp$}
					\STATE $Q$.pushBottom(task)
					\STATE task $\gets\perp$
				\ENDIF
				\STATE switchToCoordinator($xc$)
				\RETURN
			\ENDIF
		\ENDIF
		\STATE \COMMENT {We win, partner will eventually register for our task}
	\ENDIF
	\STATE $\ell\gets\ell+1$
\ENDWHILE

\IF{$\neg$ $c$.taskIsReady($I$)} \STATE backoff()
\ENDIF
\end{algorithmic}
\end{algorithm}

The switchToCoordinator($xc$) procedure presented in
Algorithm~\ref{alg:switch} tries to set the coordinator reference $c$
to the coordinator given as parameter. It also performs deregistration
from the old coordinator if necessary, and checks whether the
coordinator still requires help from this thread, before registering
for it. The overlap() function used in this method checks whether the
given thread ids would both be in the same team for a task of the size
specified as the third parameter. This works similar to the
calculation of local thread id's described in Section~\ref{sec:prop}.

\begin{algorithm}
\caption{switchToCoordinator($xc$)}\label{alg:switch}
\begin{algorithmic}[1]
\LOOP
	\STATE $xcR \gets xc.R$
	\IF [Coordinator requires help from this thread]{overlap($xc.I$, $I$, $xcR.r$)}
		\IF {$c \neq \mbox{ThreadRef}[I]$}
			\STATE \COMMENT {First drop previous coordinator}
			\STATE $RR \gets c.R$
			\IF {overlap($c.I$, $I$, $RR.t$)}
				\STATE \COMMENT {We are in our current coordinators team and therefore can't drop out}
				\RETURN
			\ENDIF
			\STATE $RR' \gets RR$
			\STATE $RR'.a \gets RR'.a - 1$
			\IF {$\neg$ CAS($c.R$, $RR$, $RR'$)}
				\STATE backoff();
			\ELSE
				\RETURN
			\ENDIF
		\ELSE
			\STATE $RR \gets xcR$
			\STATE $RR.a \gets RR.a + 1$
			\IF {CAS($xc.R$, $xcR$, $RR$)}
				\STATE \COMMENT {We have successfully registered for coordinator}
				\IF{$R.r > 1$}
					\STATE \COMMENT{If this thread was coordinating a task, we have to stop coordinating}
					\STATE $RR' \gets R$
					\STATE $RR'.r \gets 1$
					\STATE $RR'.t \gets 1$
					\STATE $RR'.a \gets 1$
					\STATE $RR'.N \gets RR'.N + 1$
					\STATE $R \gets RR'$
				\ENDIF
				\STATE $c \gets xc$
				\STATE $cN \gets RR.N$
			\ELSE
				\STATE backoff()
			\ENDIF
		\ENDIF
	\ELSE
		\RETURN
	\ENDIF
\ENDLOOP

\end{algorithmic}
\end{algorithm}

\subsection{Basic properties}
\label{sec:prop}

Teams are always built out of consecutive threads, as the threads
allowed to join a team of a certain size at a certain coordinator are
static and deterministic as determined by the bit-flipping in the
stealTasks() and pollPartners($c$,$r$) procedures.  If we switch to a
different task, where some registered threads might not be required,
we reset the acquired counter $a$ to enforce this property. Due to
this bit-flipping, teams always consist of the thread id's
$kr,kr+1,\ldots,i,\ldots (k+1)r-1$ for some $k$ in the range $0 \leq k
< p/r$.

Teams stay together as long as the coordinator's next task is
the same size as the team. If the next task is smaller, the team is
deterministically shrunk to the required size. This is done by the
coordinator by updating $t$. Each thread can deterministically
calculate, whether it is still on the team. If the next task is
larger, the coordinator breaks up the team as soon as execution of the
previous task has finished. This is done, by setting $t=1$. The team
for the larger task then has to be rebuilt from scratch.

Stealing follows a deterministic pattern in our scheduler. We contact
$\log p$ partner threads, before backing off. This was necessary in
order for the teams to build properly, and may furthermore be advantageous
to ensure memory-locality.  If threads are initially numbered such that
threads within each  memory-hierarchy level are consecutive,
bit-flipping will ensure that teams are formed by threads that are
close in the memory hierarchy.  Such locality optimizations by
deterministic stealing have often been considered, see for instance
the BubbleSched framework~\cite{Thibault07}. 

An important property of work-stealing with double ended queues is
that tasks are executed in depth-first order. With deterministic
team-building it can happen that larger ($r>2$) tasks are
stolen back and forth until they are finally executed if there are smaller tasks in-between. This may also
lead to two tasks of the same size switching order inside a queue,
therefore violating the depth-first order. We address this issue in
Refinement 1. There we also show correctness of the algorithm. 

Another property of work-stealing is that as long as a thread can
execute tasks it does not have to communicate with other threads. We
can expand this property to teams of arbitrary sizes, with the
restriction that this only holds as long as the next task requires the
same amount of threads as the previous one. Of course, communication
cannot be completely omitted with tasks requiring more than one
thread, as threads in a team have to poll the coordinator for the next
task, and have to notify it when execution starts, but this overhead
is small.  If task sizes in a single queue vary, communication is
needed every time a larger task follows a smaller task. This issue
also becomes less problematic with Refinement 1, although it is not
completely resolved if we allow arbitrary task sizes (Refinement 2).

We finally explain how a consecutive numbering of the threads in a
team starting from 0 is achieved.  As soon as a thread knows the size
of the team, it can use its global thread id to calculate the
boundaries of the team, and therefore its local id. This is done by
first retrieving the position of the most significant bit in
$t$. Retrieving the most significant bit can either be done in $\log
b$ operations where $b$ is the number of bits in the integer, but most
modern processors support this operation in hardware. The leftmost
thread id in the team is calculated by setting all bits in the
coordinator id that are below the most significant bit of $t$ to
$0$. For the rightmost thread id we have to set all those bits to
$1$. The local id's for the execution of a task can simply be
calculated by subtracting the leftmost thread id from the id of the
actual thread.

We estimate the extra overheads in deterministic team-building as
follows: an extra CAS used in Algorithms~\ref{alg:coordinate}
and~\ref{alg:newstealtask}. If all tasks require $r=1$ the algorithm
coincides with a deterministic work-stealing scheduler, where $\log p$
fixed partners are tried before the backoff(). The additional CAS that
do not appear in classic work-stealing are never executed in this
case. Actually, as now written in Algorithm~\ref{alg:coordinate} a CAS
in coordinateTask() would be executed as some code was omitted for
readability reasons. In the actual implementation, the CAS is only
executed if the new team size differs from the old one, and this is
never the case for one-thread tasks.

\subsection{Refinement 1: Multiple work queues}

In the basic variant, two tasks may switch order in a queue, if there
are some smaller tasks between them (by being stolen back and forth,
as explained above). This means that in the worst case tasks are not
executed in a depth-first order any more by a single thread. Also,
larger tasks might switch often between two queues, until most smaller
tasks are processed. We can resolve this problem by using $\log p$
local queues instead of one.

Each queue stores tasks of a certain size, in particular queue $Q_i$
keeps tasks requiring $r=2^i$ threads. A thread always executes the
smallest tasks first, and moves to queues with larger tasks as soon as
all queues with smaller tasks are empty. When a team of threads works
on a queue, it continues working on this queue, even if queues
containing smaller tasks get filled again. Only after the queue is
empty, the team is resized to work on a queue containing smaller
tasks.

We can now forbid threads to steal tasks, where both threads would
be in the same team. This reduces the required communication.

This refinement improves some of the properties of the algorithm. The
main improvement is, as described before, that tasks of a certain size
are now executed in a depth-first order. Reordering of such tasks is
impossible. Also, on one thread, tasks requiring less threads are
executed before larger tasks. The only exception occurs for small
tasks that are created after a larger team has been formed by this
method. They have to wait until the team is resized. Due to the
clustering of the execution of same-size tasks, we reduce the required
coordination due to varying task sizes.

This refinement is necessary for the following two refinements.

\subsection{Correctness}

For our correctness argument, we assume we have $\log p$ queues per
thread (as per Refinement 1), and that the number of threads required
per task, as well as the total number of threads, are powers of two.

\begin{lem} 
Assume that the computation is finite. A thread $i$ has a task
requiring $r\geq 1$ threads. This task will eventually be executed.
\end{lem}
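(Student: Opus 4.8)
The plan is to prove termination by a potential-function / well-founded-induction argument, ruling out the two failure modes: (a) the task never gets executed because team-building stalls forever, and (b) the thread deadlocks polling/waiting. The key structural facts I would lean on are already established in the excerpt: teams always consist of a fixed, deterministic set of consecutive threads $kr,\ldots,(k+1)r-1$ (so the ``right'' $r-1$ partners for thread $i$'s task are statically known), and with Refinement~1 each size class lives in its own queue $Q_j$ with strictly smaller tasks executed first.

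\medskip

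First I would set up an induction on task size $r=2^\ell$, from small to large, combined with a secondary induction on the (finite) total number of tasks in the computation. The base case $r=1$ is essentially classical work-stealing: a single-thread task at the bottom of $Q_0$ is its own coordinator, \texttt{coordinateTask()} sees $R.r=R.a=1$ immediately, fixes the team of size one, and runs it; no partner cooperation is needed, so finiteness of the computation plus the standard work-stealing termination argument suffices. For the inductive step I assume every task requiring fewer than $r$ threads is eventually executed, and I must show a task requiring exactly $r$ threads on thread $i$ is eventually executed.

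\medskip

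The heart of the step is showing the team of $r$ threads eventually forms, i.e.\ $R.a$ at the coordinator eventually reaches $R.r=r$ so that the CAS in \texttt{coordinateTask()} succeeds. Here I would use the deterministic tie-breaking invariant: among the $r$ threads that must team up, the conflict-resolution rule in \texttt{pollPartners} (smaller $r$ wins; among equal $r$, smaller id wins) induces a \emph{total order} on competing coordinators, so exactly one thread in each team becomes the unique winning coordinator and all others are driven, via \texttt{switchToCoordinator}, to register with it. The argument I would make is that any partner thread not yet in the team is either (i) idle/stealing, in which case \texttt{stealTasks()} with the bit-flip at the right level $\ell$ finds the coordinator's outstanding request ($xcR.r\ge 2^{\ell+1}$) and registers, incrementing $a$; or (ii) busy executing or coordinating a task of size $<r$, which by the size-ordering of Refinement~1 and the \emph{inner} induction hypothesis completes in finite time, after which the thread becomes idle and falls into case~(i). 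Since there are only finitely many smaller tasks and each completes, every required partner eventually registers, $a$ climbs monotonically to $r$ (the \texttt{overlap} guard in \texttt{switchToCoordinator} prevents a teamed thread from deregistering, so progress toward $r$ is not undone), and the coordinator fixes the team and runs the task.

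\medskip

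\textbf{The main obstacle} I anticipate is ruling out livelock in the team-formation phase: the counter $N$ is incremented whenever a coordinator shrinks $r$ (a new smaller task is pushed), forcing acquired threads to re-register, and tasks of size $r$ can be ``stolen back and forth'' as noted in Section~\ref{sec:prop}. I must therefore argue that these disruptions cannot happen infinitely often. The leverage is finiteness of the whole computation together with Refinement~1's queue separation: each $N$-increment is caused by pushing a genuinely new (smaller) task, of which there are finitely many, and once all tasks strictly smaller than $r$ are drained the coordinator's $R.r$ can no longer be lowered below $r$, so from some point on the size-$r$ team assembles without further resets. Making this ``eventually stable'' claim precise — pinning down that no new task of size $<r$ can be spawned after the relevant point, which follows because spawning is itself an execution event and the computation is finite — is the delicate part, and I would state it as an explicit invariant before closing the induction.
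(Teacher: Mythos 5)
Your proposal is correct and rests on the same load-bearing idea as the paper's proof: tasks with smaller thread requirements are always prioritized, the computation is finite, so tasks requiring fewer than $r$ threads eventually run out, after which the size-$r$ team can form, while tasks requiring more than $r$ threads can never block it. The packaging, however, is genuinely different. The paper argues in two stages: first the homogeneous case where \emph{all} tasks require exactly $r$ threads (teams repeatedly form and dissolve as coordinators' queues empty, until thread $i$ itself becomes a coordinator), and then relaxes this restriction with precisely your observation that the task runs at latest once all smaller tasks are exhausted; crucially, it defers the tie-breaking argument to a separate lemma (``All conflicts are resolved deterministically'') rather than using it here. Your double induction --- outer on $r$, inner on the number of remaining tasks --- inlines that conflict-resolution lemma and, more importantly, explicitly confronts the livelock question ($N$-counter resets, tasks stolen back and forth), which the paper's proof never addresses: its claim that ``eventually thread $i$ will become a coordinator'' is asserted, not argued. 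So your version is a strengthening in rigor. One caveat: your statement that $a$ ``climbs monotonically to $r$'' because the overlap guard blocks deregistration is not quite accurate --- the guard in switchToCoordinator only protects \emph{teamed} threads (counted in $t$), whereas merely \emph{acquired} threads can still decrement $a$ when they defect to a winning competing coordinator, and the coordinator itself resets $a$ whenever a smaller task is pushed. Your final paragraph effectively repairs this (defections are directed by the deterministic total order, and resets are caused by finitely many spawn events), but the monotonicity claim should be weakened to ``eventually non-decreasing'' for the argument to be stated correctly.
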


\begin{proof}
For $r=1$ the case is clear. A task requiring a single thread will in
general be executed before tasks using more threads. No coordination
is required before execution, so the task will eventually be executed,
similar to classical work-stealing.

Similar findings apply to $r>1$. If we assume that all tasks in the
computation require $r$ threads, all threads will coordinate to join
teams of $r$ threads with $i\in [kr,(k+1)r-1]$. Assume, thread $i$ is
the coordinator, then the task will eventually be executed. Otherwise,
the team will dissolve to search for another coordinator as soon as
the current coordinator's queue runs empty, and eventually thread $i$
will become a coordinator.

If we relax the restriction that all tasks require $r$ threads, the
given task will be executed at latest after we run out of tasks
requiring less than $r$ threads. Tasks with thread requirements larger
than $r$ cannot block execution of the given task, as tasks requiring
less threads are always prioritized.
\end{proof}

\begin{lem}
Assume, we have two tasks $x$ and $y$ in the same queue with $n\geq 0$
tasks in-between them. When using $\log p$ queues per thread, $x$ and
$y$ cannot be reordered inside a single queue.
\end{lem}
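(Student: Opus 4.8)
The plan is to exploit the defining feature of Refinement~1: queue $Q_i$ contains \emph{only} tasks with thread requirement $r=2^i$, so $x$ and $y$ necessarily have the same size, and there are no smaller tasks between them within that queue. The entire reordering phenomenon described in Section~\ref{sec:prop} was driven by \emph{smaller} tasks interleaved between two equal-sized tasks in one shared queue; I would argue that, once each size lives in its own queue, that driver is gone and every operation that can touch $Q_i$ is order-preserving.

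First I would enumerate the operations that can modify a single size-$2^i$ queue and check each in isolation. Local execution uses popBottom() and spawning uses pushBottom(); both act at the bottom and clearly preserve the relative order of every task that remains in the queue. A steal is realised by popappend(), i.e.\ repeated popTop() followed by pushBottom() into the thief's size-$2^i$ queue. I would verify that this migration also preserves order: removing tasks top-first and appending them bottom-first reinstates the same top-to-bottom sequence in the destination queue, so two tasks stolen together keep their order, and the tasks left behind keep theirs.

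The only operation that actually reorders tasks in the basic variant is the ``push the currently held task back to the bottom'' step performed in pollPartners() and switchToCoordinator() when a competing task wins. The key step is to show this cannot move $x$ past $y$ (or vice versa) under Refinement~1. The task pushed back is the one just taken from the bottom of the current queue, and the event that triggers the push-back is the discovery of a task requiring \emph{fewer} threads, i.e.\ one living in some $Q_j$ with $j<i$, not in $Q_i$. Since a thread that is polling partners neither executes nor spawns, no other size-$2^i$ task is inserted at or removed from the bottom of $Q_i$ between the popBottom() and the matching pushBottom(); hence the task returns to exactly its former bottom position and its order relative to all remaining tasks of $Q_i$ — in particular relative to the $n$ tasks between $x$ and $y$ — is unchanged. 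Chaining these order-preserving steps over the execution then yields the claim.

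I expect the main obstacle to be the concurrency bookkeeping around this push-back: I must rule out that, while one thread has transiently removed a size-$2^i$ task from the bottom of $Q_i$, another thread inserts or reorders a size-$2^i$ task at that same bottom. Here I would lean on the Refinement~1 invariants that a team works a given queue to completion and that threads in the same team are forbidden to steal that queue from one another, so that concurrent access to a particular $Q_i$ is confined to the order-preserving top-steals by non-team threads already handled in the previous paragraph.
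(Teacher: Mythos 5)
There is a genuine gap, and it sits exactly where the lemma's difficulty lies. Your enumeration treats popappend() as order-preserving, but your argument only covers the relative order of tasks stolen in the same batch and of tasks left behind in the victim's queue. It says nothing about the relative order of a stolen task with respect to tasks \emph{already present} in the thief's destination queue: pushBottom() places the stolen task $x$ \emph{below} everything the thief already holds, so if the thief's queue already contains $y$, the pair $(x,y)$ is reordered at that moment. This is not a corner case --- it is the \emph{only} mechanism by which reordering happens, and it is precisely what occurs in the basic (single-queue) variant when large tasks are ``stolen back and forth'' past smaller ones. The paper's proof is organized around exactly this observation: it first reduces the problem to the single bad event ``a thread that has $y$ in its queue steals $x$,'' and then kills that event by a case analysis on when steals can occur. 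A steal happens either (i) when all of the thief's queues are empty --- so $y$ cannot be present --- or (ii) during coordination, where only tasks requiring strictly fewer threads than the coordinated task may be stolen; since a thread always coordinates its smallest available task, this forces $r_x < r_y$, contradicting the fact that $x$ and $y$ lie in the same queue and hence $r_x = r_y$. Your proposal never makes this case analysis, so the central step of the proof is missing.

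By contrast, the push-back issue you single out as the ``main obstacle'' (a task popped from the bottom and later re-pushed to the bottom in pollPartners()/switchToCoordinator()) is peripheral: as you yourself argue, that task returns below all remaining tasks, so it cannot invert any pair. If you want to salvage your operation-by-operation strategy, the fix is to strengthen your popappend() step into an invariant of Refinement 1: \emph{every steal lands in an empty destination queue}. That invariant follows from the same two cases above --- ordinary steals require all local queues empty, and coordination-time steals target only sizes strictly below the coordinated (smallest) size, whose local queues are empty by the execute-smallest-first discipline --- at which point your chaining argument goes through. But that invariant is the substance of the paper's proof, not a detail to be deferred.
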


\begin{proof}
Let's assume that $x$ is nearer to the top of the queue than
$y$. Therefore, $x$ would be stolen first. Assume, both get stolen by
the same thread, then the order of both tasks in the target queue
would stay the same, even if stolen at different times. The only case
when $x$ and $y$ could switch order would be if a thread has $y$ in
its queue, and then steals $x$. A task can only be stolen in two
cases: If all queues of the stealing thread are empty, or during
coordination. If all queues of the stealing thread are empty, they
cannot contain $y$. During coordination, only tasks are stolen that
require less threads than the task to coordinate. As coordination is
always done for the task that requires the least amount of threads, we
require that $r_{x}<r_{y}$, which contradicts our assumption that both
$x$ and $y$ are in the same queue.
\end{proof}

\begin{lem}
All conflicts are resolved deterministically.
\end{lem}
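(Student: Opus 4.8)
The plan is to first pin down what constitutes a \emph{conflict} in the algorithm, then exhibit the resolution rule as a strict total order and show that the two threads involved in any conflict necessarily compute the same winner. A conflict arises only inside pollPartners (Algorithm~\ref{alg:poll}) when a thread $i$, acting as coordinator for a task requiring $r>1$ threads, discovers that a deterministic partner $x$ is itself coordinating a task of size $>1$ whose team would overlap with $i$'s, so that both threads want the other to join \emph{their} team. Registration via stealTasks() is resolved unconditionally by the CAS and is not a conflict in this sense; neither is the case where a partner requires only $r=1$, since one-thread tasks carry no coordination. The resolution rule read off from the pseudocode is: compare the pairs $(r,I)$ of the two coordinators lexicographically, preferring the smaller thread requirement $r$ and, on a tie, the smaller id $I$; the coordinator with the lexicographically smaller pair wins and the other yields by calling switchToCoordinator.

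First I would argue that this rule is well-defined as a strict total order. Because ids lie in the range $0\le I<p$ and are unique, no two distinct coordinators share the same pair $(r,I)$, so for any conflicting pair exactly one is strictly smaller. Hence the winner is uniquely determined by the two coordinators' current $(r,I)$ values, with no appeal to randomness, timing, or thread scheduling, which is precisely the determinism the lemma asserts.

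Next I would establish \emph{agreement}: the two threads in a conflict reach the same verdict. Both evaluate the same total order on the same two pairs, so the loser's pollPartners takes the yielding branch (pushing back its own task and switching to the winner's coordinator) while the winner takes the ``we win, partner will eventually register'' branch. Since the order is total, there is never a pair in which both think they won, which would strand each team one thread short, nor one in which both yield, which would leave a task with no coordinator. I would also remark that totality rules out a cyclic wait $A\to B\to\cdots\to A$ among coordinators, which, together with the completeness argument of the preceding lemmas, is what makes the yielding terminate.

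The main obstacle is concurrency: each thread reads its partner's coordinator $xc$ and registration structure $xcR$ at a possibly different instant, and a task at a queue bottom may change size (altering $r$) or a coordinator may be switched in between. I would handle this by noting that the resolution rule depends only on the snapshot pair $(r,I)$ each thread actually observes, so any single evaluation is deterministic; if the two snapshots are momentarily inconsistent, e.g.\ one thread still sees the other's old, larger $r$, the enclosing loops in getTask() and pollPartners re-poll, and the registration counter $N$ invalidates stale registrations, so the system reconverges to a consistent verdict. Thus the determinism of the criterion is immediate from the total order, and the concurrent-read subtleties affect only how quickly the consistent outcome is reached, not whether it is deterministic; the crux of the proof is therefore the totality-plus-agreement argument rather than the routine re-polling bookkeeping.
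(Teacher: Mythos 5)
Your core determinism argument matches the paper's: the resolution criterion you extract is exactly the paper's rule (smaller thread requirement $r$ wins, ties broken by smaller id $I$), and the paper's proof likewise walks through the cases $r_x = r_y$ with $x<y$, and $r_x < r_y$. Formalizing this as a strict total order on pairs $(r,I)$ is a clean way to state that part, and the agreement argument is sound as far as it goes.

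The genuine gap is in how you dispose of what you yourself call the main obstacle. You treat concurrent change as mere snapshot staleness, claiming it affects ``only how quickly the consistent outcome is reached, not whether it is deterministic,'' to be repaired by re-polling and the counter $N$. But the paper's proof spends most of its effort on a different phenomenon: the conflict parameters themselves legitimately evolve. During coordination, pollPartners allows a coordinator $y$ to steal tasks requiring strictly fewer threads than its current task, and (with the smallest-first discipline of Refinement 1) $y$ then coordinates that smaller task, so its requirement drops to $r'_y < r_y$. If $r'_y < r_x$, the verdict flips: $x$, previously the winner, must now yield to $y$ if $x$ lies in $y$'s new team interval, or the teams cease to overlap and the conflict dissolves. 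This is not stale data being re-read; it is a new conflict instance with a different winner. To prove that conflicts are \emph{resolved}, you must exclude an unbounded sequence of such flips, and neither totality of the order (which only excludes cycles among fixed $(r,I)$ values) nor your appeal to the preceding completeness lemma does this --- the completeness lemma's own proof presupposes that coordination conflicts resolve, so leaning on it risks circularity. The paper closes the hole with a decreasing-measure argument: every task stolen during coordination requires strictly fewer threads than the previously coordinated one, so each coordinator's requirement strictly decreases and is bounded below by $1$, hence only finitely many re-resolutions can occur and the conflict is resolved ``either way sooner or later.'' Your proof needs this measure (or an equivalent progress argument) to be complete.
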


\begin{proof}
Assume that thread $x$ and thread $y$ both try to coordinate a task
with $y \in [kr_{x},(k+1)r_{x}-1]$ and $x \in
[kr_{y},(k+1)r_{y}-1]$. Assume $r_{x}=r_{y}$, and $x<y$ then $x$ will
be chosen deterministically. All threads with $y$ as coordinator will
switch to $x$ as soon as they encounter a thread with $x$ as
coordinator during coordination. Assume $r_{x}<r_{y}$, then again $x$
will be deterministically chosen. Assume that thread $x$ wins, but
thread $y$ steals another task during coordination of $y$ before
encountering threads coordinated by $x$. As we only allow to steal
tasks during coordination that require less threads than the
coordinated task, the new thread requirement $r'_y$ must be less than
the old requirement. Assume that now $r_{x}>r'_{y}$, then if $x \in
[kr'_{y},(k+1)r'_{y}-1]$ still holds, thread $x$ will switch to thread
$y$ as coordinator. Otherwise, thread $y$ is independent of thread $x$
and the conflict therefore resolved. As each stolen task has to be
smaller than the previous one, the conflict will be resolved either
way sooner or later.
\end{proof}

\begin{lem}
Each task is only executed once by each of the threads in a team.
\end{lem}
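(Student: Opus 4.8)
The plan is to track the \emph{life-cycle} of a single launched task and to show that the countdown $G$, together with the coordinator's exclusive control of the $c.\mbox{task}$ field, implements a handshake that each team member crosses exactly once. First I would pin down the unique moment at which a task becomes \emph{ready}. By inspection of coordinateTask() (Algorithm~\ref{alg:coordinate}), a task is made ready only by the single successful CAS that atomically rewrites the registration $RR$ with $RR.r=RR.a$ into $RR'$ with $RR'.t=RR.r$. The thread that wins this CAS (the coordinator) then sets $G\gets RR'.r-1$, pops the task with popBottom(), and publishes it in $c.\mbox{task}$. Hence for a team of $r$ threads there is exactly one \emph{launch event} per task, and the coordinator executes that task exactly once, since it obtains the task object only through this popBottom() and no other code path hands it the same object.

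Next I would argue that each non-coordinator member picks the task up exactly once. Since teams are built from the fixed, consecutive id's $kr,\dots,(k+1)r-1$ (Section~\ref{sec:prop}), the $r-1$ non-coordinators are precisely the threads whose registration the coordinator counted when reaching $RR.a=RR.r$. Each such thread reaches the branch $c.\mbox{task}\neq\perp$ in getTask() (Algorithm~\ref{alg:newgettask}), returns the task, and—per the stated semantics—atomically decrements the coordinator's countdown $G$ once as execution starts. The claim therefore reduces to the statement that no team member crosses this ready-branch twice for the same launch event.

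The key invariant I would establish is that $c.\mbox{task}$ points to the launched task exactly over the interval from the launch event until the coordinator observes $G=0$, and that the coordinator is forbidden from overwriting $c.\mbox{task}$ or issuing a successor CAS while $G>0$; this is the purpose of the guard ``while $G>0$ do backoff()'' at the head of getTask() and of the rule that the coordinator may reset $c.\mbox{task}$ only once $G=0$. Using atomicity of the decrement, each non-coordinator lowers $G$ by exactly one per launch; as $G$ starts at $r-1$ and is driven to $0$ by the $r-1$ \emph{distinct} registered members, a second decrement by the same thread would force $G$ negative, which the invariant forbids. I would make this robust by tagging each launch with the coordinator's current new-counter $N$ and having each member compare against the counter it last served (the coordinatorCounter / $cN$ fields already recorded in Algorithms~\ref{alg:newstealtask} and~\ref{alg:switch}), so that a member re-enters the execution branch only for a strictly larger $N$.

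The main obstacle I anticipate is closing the short window between the moment a member decrements $G$ and the moment the coordinator, on seeing $G=0$, resets $c.\mbox{task}$: in principle a fast member could finish run(), loop back through the main loop, and re-read a still-non-$\perp$ $c.\mbox{task}$ before the reset. The crux is thus to show that within this window a member cannot re-take the \emph{same} launch—either because it has already advanced its recorded counter past this launch's $N$ (so the $N$-comparison rejects it), or because the member-side, $I$-specific readiness predicate $c.\mbox{taskIsReady}(I)$ (Algorithm~\ref{alg:poll}) already returns false once thread $I$ has registered its start. Establishing that this per-thread predicate flips precisely at the decrement and stays false until the next launch with a larger $N$ is the delicate timing argument; once it is in place, the exactly-once property follows by combining the single launch event, the $r$ distinct registered members, and the monotone counter.
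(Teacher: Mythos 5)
Your proposal correctly isolates the two easy parts (the unique launch event via the single successful CAS in coordinateTask(), and the coordinator executing the popped task exactly once), and you have put your finger on exactly the right obstacle: the window between a member's decrement of $G$ and the coordinator's reset of $c.\mbox{task}$, during which a fast member could loop around and re-read a still-non-$\perp$ $c.\mbox{task}$. However, the mechanisms you propose to close that window do not work, and the one the paper actually uses is missing from your argument. The new-counter $N$ is \emph{not} a launch counter: it is incremented only when the coordinator must invalidate registrations (a smaller task pushed while $r>t$), and a standing team executing a sequence of same-size tasks does so precisely \emph{without} touching $N$ --- this is an advertised feature of the algorithm (``teams can stay to process further tasks \ldots{} this requires no further coordination''). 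So tagging launches with $N$ and admitting a member only for strictly larger $N$ would either fail to distinguish consecutive launches (they carry the same $N$) or, read literally, forbid members from executing any second task of a persistent team. The appeal to $c.\mbox{taskIsReady}(I)$ is speculation about semantics the paper never defines, and you concede yourself that the needed timing property is not established. Finally, the step ``a second decrement by the same thread would force $G$ negative, which the invariant forbids'' is circular: an invariant cannot forbid a behavior, the code must prevent it, and a double decrement is exactly the failure mode to be excluded.

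The paper closes the window with a different and simpler device that your proof never introduces: each coordinated thread \emph{remembers the identity of the last task it executed} and refuses to execute it again until it either drops out of the team or the coordinator starts coordinating a new task; symmetrically, the coordinator removes the task reference before it starts coordinating again. With this per-member memory, a stale non-$\perp$ $c.\mbox{task}$ is harmless (it compares equal to the remembered task and is skipped), and exactly-once execution follows immediately, with no reasoning about $G$'s sign or about when the reset becomes visible. To repair your proof, replace the $N$-tagging and taskIsReady speculation with this remembered-task comparison; the rest of your structure (unique launch event, $r$ distinct registered members, coordinator reset gated by $G=0$) can then stand as is.
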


\begin{proof}
A task is always managed by only one thread (the coordinator) and
cannot occur in two queues at the same time. The start of task
execution is managed by the coordinator, and the reference to an
executed task removed before the coordinator starts coordinating
again. Coordinated threads in a team have to remember the last
executed task to make sure they do not execute it again, until they
either drop out of the team, or the coordinator starts coordinating a
new task.
\end{proof}

\subsection{Refinement 2: Arbitrary thread requirements}

We now show how to cope with the case where each new task can require
an arbitrary number of threads, $r\leq p$, and not only requirements
that are powers of two.

The easiest way to do this would be to just allocate a team with a
size equal to the next-highest power of two, and to let some threads
sit idle during execution. This, of course, is far from ideal, and it
would be preferable if the threads that would otherwise be idle worked
on smaller tasks. Nonetheless, we cannot completely ignore those
threads, as they might be the first partners, some thread that is
required for the team visits.

We propose that during coordination, such threads that will not
actually work on a task  silently register at the
coordinator. Registering silently means, that the thread's
coordination pointer is set to the coordinator, but it does not
increment the registration counter. As soon as execution of the task
starts, the thread may start working on another task.

We note that it is still necessary to help those tasks empty their
queues, even if they might not always interfere in coordination and
might later run out of work. Sometimes, some of those threads
might be coordinating another task that requires a team that does not
intersect with the team of our task. We do not need to steal from those
threads as they won't interfere with our task.

Although it is possible to support arbitrary task sizes, we can only
provide weak guarantees concerning the utilization of the hardware
threads. In the worst case, nearly half of the threads may sit
idle. This would happen if we have tasks with $r=2^{k}+1$ to execute,
and all smaller tasks on silently registered threads would have been
executed before forming the team. Therefore the programmer should
preferably use tasks that are aligned to a power of two.

Another problem is that teams might dissolve before a queue has been
processed, because of a larger task following a smaller
task. Therefore the programmer should try not to have varying task
sizes in single queues. For some applications it might be feasible to
provide additional queues for certain sizes that are often used, but
this approach cannot be generalized, and providing $p$ queues is not
feasible anyway with increasing number of cores.

\subsection{Refinement 3: Arbitrary number of hardware threads}

We finally extend to the general situation where an arbitrary (finite,
fixed) number of threads is given from the outset, and
each newly spawned task can require an arbitrary number of threads.

In the standard algorithm, we assume that each level $\ell$ at which a
thread has one partner, contains exactly $2^{\ell}$ threads. We relax
this constraint by allowing a level to contain $n_{\ell}$ tasks, where
$n_{\ell-1} < n_{\ell} \leq 2n_{\ell-1}$ and $n_{0}=1$. This
information has to be statically precomputed at startup time, and has to
be accessible to all threads. This relaxation has two implications:
First, some threads will not have a partner at certain levels
$\ell$. Second, some threads won't have access to the full number of
threads for a team $n_{\ell}$ on level $\ell$.

As the information about a thread's partner cannot be conveniently
generated on the fly any more, each thread has to precompute and
store an array $P$ of its $\log p$ partners. If, for a thread, the
partner at level $\ell$ is missing, we store $P[{\ell}]=\perp$. Also,
each thread has to precompute and store the actual team-sizes $n'$
available at level $\ell$, where $n'_{\ell-1} \leq n'_{\ell} \leq
n_{\ell}$. Each thread has $\log p$ task-queues, where the queue for
level $\ell$ stores tasks in the range $n'_{\ell-1} < x \leq
n'_{\ell}$. Some threads might have queues that are never used, and
therefore do not have to be reserved in memory.

The actual execution proceeds similarly to the standard execution,
only that instead of relying on bit-flipping, we have to rely on
precomputed information about partners and team-sizes. Also, as
partners at some levels might not be available, we should be able to
handle that. Last, but not least, we have to be aware that stolen
tasks will not necessarily be stored in a queue at the same level as
in the originating thread. This might create balancing issues, where
tasks are not stolen from a partner, as they are in a queue at the
same level as the level of the partner, which is not allowed by the
algorithm, even though two or more of those tasks could be executed in
parallel by all threads at this level. This case may actually only
occur, if a thread has queues that would never be used as described in
the previous paragraph. If we use those queues for storing the tasks
in question, we can resolve the issue. 

The properties of the algorithm should more or less stay the same with
this refinement, only that the ideal task-size is no longer a power of
two and may vary depending on the actual thread the task is executed
on. An advantage of this approach is that it can sometimes provide a
more suitable representation of a homogeneous multi-core and therefore
provide good locality in many cases. For example, if we take a
dual-socket system with two three-core processors, we would structure
the threads with $n_{0}=2$, $n_{1}=3$ and $n_{2}=6$. This guarantees,
that a 3-processor task is executed on one core, which reduces memory
access times.

\subsection{Refinement 4: Randomizing stealing and team-building}

The deterministic work-stealing scheme can be augmented with
randomization which can theoretically guard against (easily
constructible) degenerate cases where threads may sit idle waiting for
any partner thread to finish working on its current task and starting
to steal from threads that this thread may not reach. When choosing a
partner for stealing and/or team-building at a certain level $\ell$,
in addition to applying a bitwise exclusive or with $2^{\ell}$ to the
id of the stealing thread, we also randomize all bits below the
$\ell$th bit. This can be implemented by performing the exclusive or
with a random integer $2^{\ell}\leq i<2^{\ell+1}-1$ instead of the
fixed bit $2^{\ell}$.  Thus, instead of always deterministically
choosing the same thread at level $\ell$, a random thread is chosen
out of a set of $2^{\ell}$ threads. Using this strategy, each thread
may steal from any thread over a total of $\log p$ steal attempts, but
the required hierarchy between the threads is preserved.

\section{Implementation}
\label{sec:implementation}

We have implemented a prototype of the work-stealing scheduler with
deterministic team-building as described above in C++ using Pthreads
to start the $p$ hardware threads.  The atomic operations used in the
implementation are \emph{compare-and-swap} and
\emph{fetch-and-decrement}, which are all available as atomic builtins
in \texttt{gcc}. The \emph{compare-and-swap} primitive is required for
modifications on the registration structure, and for accesses to the
work-stealing deque. \emph{fetch-and-decrement} is used for counting
down started tasks. For retrieving the most significant bit of an
integer, we use the \emph{bsrl} assembly instruction available on
Intel architectures, as this operation is not provided as a library
call under the Linux operating system. Under BSD, the
\emph{fls} library function can be used instead. Retrieving the most
significant bit is necessary for calculating the boundaries of a team
as explained in Section~\ref{sec:prop}, and for choosing in which
queue to store a task.

Furthermore, the following design decisions have been made for the
implementation:

\begin{itemize}
\item Tasks are implemented as objects derived from a base task class,
quite similar to TBB~\cite{KukanovVoss07}.
\item For simplicity, we only provide one linear stack per thread in
our implementation. A cactus-stack as used in
Cilk~\cite{BlumofeJoergKuszmaulLeisersonRandallZhou96} might be more
efficient.
\item When stealing tasks the last stolen task is not put on the
stack but instead returned immediately from the stealtasks()
function. This is necessary to prevent situations, where a task is
stolen back and forth with no thread being able to execute it.
\item The scheduler terminates as soon as all threads have registered
as idle. They can register as idle if their stack and all queues are
empty and stealing has failed multiple times. Registration is
canceled before a thread starts to steal again.
\item
We have noticed that we can achieve better scheduling in many
cases, if we steal the largest allowed tasks. This comes from the fact
that a thread only steals from a thread at a certain level, if all
partner threads at lower levels had empty queues. Therefore, the
chances are high that the stealing thread will be able to build up a
team soon.
\end{itemize}

Some of tunable parameters of the implementation are given
below. Performance of the implementation might be improved by choosing
the right values, and the optimal values might differ depending on the
hardware the scheduler is run on.

\begin{itemize}
\item Backoff intervals - For our backoff function, we used
exponential backoff, starting at 1 microsecond, and going up to 10
milliseconds.
\item Number of tasks to steal - We decided to steal $2^{\ell}$ tasks
from a partner, where $\ell$ is the position of the bit to flip to get
the partner's id. This comes from the assumption that, if we reached
the $\ell^{th}$ partner during stealing, it is likely that all threads
in the $2^{\ell}$ block around the current task are running out of
tasks. Therefore it makes sense to steal enough tasks for all of them.
\end{itemize}

\section{An example with experimental results}
\label{sec:quicksort}

\begin{table*}[t!]
\begin{center}
\begin{tabular}{|lr|rr|rrr|rrr||rr|}
\hline
Type & Size & Seq/STL & SeqQS & Fork & SU & Randfork & Cilk & SU & Cilk\_sample & MMPar & SU \\
\hline
 & 10000000 & 0.940 & 1.022 & 0.243 & 3.9 & 1.027 & 0.163 & 5.8 & 0.185 & 0.201 & 4.7 \\
 & 100000000 & 10.492 & 11.421 & 2.244 & 4.7 & 9.085 & 1.828 & 5.7 & 1.953 & 1.669 & 6.3 \\
Random & 1000000000 & 112.110 & 122.450 & 20.964 & 5.3 & 31.643 & 18.903 & 5.9 & 20.534 & 18.130 & 6.2 \\
 & 8388607 & 0.781 & 0.848 & 0.229 & 3.4 & 0.864 & 0.154 & 5.1 & 0.158 & 0.182 & 4.3 \\
 & 33554431 & 3.320 & 3.639 & 0.778 & 4.3 & 3.357 & 0.587 & 5.7 & 0.681 & 0.603 & 5.5 \\
 & 134217727 & 14.335 & 15.638 & 2.924 & 4.9 & 9.422 & 2.112 & 6.8 & 2.556 & 2.236 & 6.4 \\
\hline
 & 10000000 & 0.937 & 1.017 & 0.245 & 3.8 & 1.189 & 0.154 & 6.1 & 0.184 & 0.199 & 4.7 \\
 & 100000000 & 9.971 & 10.883 & 2.310 & 4.3 & 7.713 & 2.025 & 4.9 & 2.280 & 1.646 & 6.1 \\
Gauss & 1000000000 & 101.042 & 110.295 & 20.151 & 5.0 & 34.062 & 18.385 & 5.5 & 24.096 & 16.580 & 6.1 \\
 & 8388607 & 0.785 & 0.847 & 0.205 & 3.8 & 0.794 & 0.139 & 5.7 & 0.156 & 0.177 & 4.4 \\
 & 33554431 & 3.328 & 3.609 & 0.727 & 4.6 & 3.403 & 0.604 & 5.5 & 0.649 & 0.594 & 5.6 \\
 & 134217727 & 13.613 & 14.859 & 2.881 & 4.7 & 10.175 & 2.171 & 6.3 & 2.625 & 2.103 & 6.5 \\
\hline
 & 10000000 & 0.873 & 0.962 & 0.255 & 3.4 & 1.121 & 0.117 & 7.5 & 0.141 & 0.204 & 4.3 \\
 & 100000000 & 10.493 & 11.691 & 1.921 & 5.5 & 9.489 & 1.366 & 7.7 & 1.687 & 1.610 & 6.5 \\
Buckets & 1000000000 & 108.909 & 121.008 & 21.008 & 5.2 & 31.683 & 14.691 & 7.4 & 18.208 & 17.451 & 6.2 \\
 & 8388607 & 0.721 & 0.785 & 0.174 & 4.2 & 0.774 & 0.088 & 8.2 & 0.113 & 0.174 & 4.2 \\
 & 33554431 & 3.205 & 3.535 & 0.615 & 5.2 & 3.262 & 0.415 & 7.7 & 0.502 & 0.561 & 5.7 \\
 & 134217727 & 13.573 & 14.971 & 2.344 & 5.8 & 8.555 & 1.465 & 9.3 & 1.945 & 2.129 & 6.4 \\
\hline
 & 10000000 & 0.869 & 0.977 & 0.219 & 4.0 & 1.041 & 0.148 & 5.9 & 0.170 & 0.189 & 4.6 \\
 & 100000000 & 9.845 & 10.837 & 1.814 & 5.4 & 6.621 & 1.154 & 8.5 & 1.480 & 1.593 & 6.2 \\
Staggered & 1000000000 & 102.498 & 112.668 & 17.593 & 5.8 & 24.018 & 13.869 & 7.4 & 18.701 & 16.096 & 6.4 \\
 & 8388607 & 0.731 & 0.819 & 0.173 & 4.2 & 0.647 & 0.173 & 4.2 & 0.180 & 0.174 & 4.2 \\
 & 33554431 & 3.120 & 3.591 & 0.701 & 4.5 & 2.612 & 0.387 & 8.1 & 0.485 & 0.613 & 5.1 \\
 & 134217727 & 13.365 & 14.816 & 2.356 & 5.7 & 8.746 & 1.720 & 7.8 & 2.050 & 2.174 & 6.1 \\

\hline
\end{tabular}
\end{center}
\caption{Quicksort on the 8-core Intel Nehalem system. 
Average running times over 10 repetitions in seconds. 
Speedup is calculated relative to the (best) sequential STL implementation.}
\label{tab:sort_cora_avg}
\end{table*}

\begin{table*}[t!]
\begin{center}
\begin{tabular}{|lr|rr|rrr|rrr||rr|}
\hline
Type & Size & Seq/STL & SeqQS & Fork & SU & Randfork & Cilk & SU & Cilk\_sample & MMPar & SU \\
\hline
 & 10000000 & 0.939 & 1.017 & 0.232 & 4.0 & 0.505 & 0.162 & 5.8 & 0.183 & 0.194 & 4.8 \\
 & 100000000 & 10.483 & 11.404 & 2.168 & 4.8 & 4.813 & 1.812 & 5.8 & 1.911 & 1.641 & 6.4 \\
Random & 1000000000 & 111.442 & 121.697 & 20.770 & 5.4 & 23.703 & 18.665 & 6.0 & 20.441 & 16.973 & 6.6 \\
 & 8388607 & 0.767 & 0.834 & 0.215 & 3.6 & 0.696 & 0.152 & 5.0 & 0.158 & 0.173 & 4.4 \\
 & 33554431 & 3.317 & 3.632 & 0.765 & 4.3 & 1.316 & 0.585 & 5.7 & 0.646 & 0.577 & 5.7 \\
 & 134217727 & 14.240 & 15.535 & 2.853 & 5.0 & 3.524 & 2.101 & 6.8 & 2.550 & 2.213 & 6.4 \\
\hline
 & 10000000 & 0.926 & 1.006 & 0.238 & 3.9 & 1.086 & 0.154 & 6.0 & 0.183 & 0.187 & 4.9 \\
 & 100000000 & 9.961 & 10.864 & 2.250 & 4.4 & 4.002 & 2.014 & 4.9 & 2.262 & 1.573 & 6.3 \\
Gauss & 1000000000 & 100.551 & 109.778 & 19.900 & 5.1 & 23.386 & 18.273 & 5.5 & 24.036 & 15.567 & 6.5 \\
 & 8388607 & 0.765 & 0.826 & 0.193 & 4.0 & 0.304 & 0.138 & 5.5 & 0.155 & 0.168 & 4.6 \\
 & 33554431 & 3.275 & 3.555 & 0.704 & 4.7 & 1.377 & 0.599 & 5.5 & 0.646 & 0.568 & 5.8 \\
 & 134217727 & 13.607 & 14.830 & 2.865 & 4.7 & 3.751 & 2.163 & 6.3 & 2.617 & 2.091 & 6.5 \\
\hline
 & 10000000 & 0.864 & 0.950 & 0.234 & 3.7 & 0.980 & 0.116 & 7.5 & 0.140 & 0.191 & 4.5 \\
 & 100000000 & 10.050 & 11.190 & 1.893 & 5.3 & 5.331 & 1.357 & 7.4 & 1.679 & 1.583 & 6.3 \\
Buckets & 1000000000 & 104.524 & 116.298 & 20.745 & 5.0 & 24.175 & 14.535 & 7.2 & 18.146 & 16.843 & 6.2 \\
 & 8388607 & 0.711 & 0.774 & 0.169 & 4.2 & 0.404 & 0.087 & 8.2 & 0.112 & 0.160 & 4.5 \\
 & 33554431 & 3.092 & 3.408 & 0.593 & 5.2 & 1.931 & 0.410 & 7.5 & 0.499 & 0.546 & 5.7 \\
 & 134217727 & 13.363 & 14.724 & 2.315 & 5.8 & 2.951 & 1.458 & 9.2 & 1.942 & 2.097 & 6.4 \\
\hline
 & 10000000 & 0.865 & 0.971 & 0.208 & 4.2 & 0.610 & 0.144 & 6.0 & 0.165 & 0.174 & 5.0 \\
 & 100000000 & 9.837 & 10.816 & 1.785 & 5.5 & 2.940 & 1.146 & 8.6 & 1.475 & 1.569 & 6.3 \\
Staggered & 1000000000 & 101.711 & 111.983 & 17.368 & 5.9 & 19.766 & 13.595 & 7.5 & 18.567 & 15.823 & 6.4 \\
 & 8388607 & 0.722 & 0.807 & 0.167 & 4.3 & 0.272 & 0.171 & 4.2 & 0.177 & 0.161 & 4.5 \\
 & 33554431 & 3.119 & 3.581 & 0.662 & 4.7 & 1.506 & 0.379 & 8.2 & 0.480 & 0.566 & 5.5 \\
 & 134217727 & 13.357 & 14.796 & 2.325 & 5.7 & 3.017 & 1.698 & 7.9 & 2.027 & 2.075 & 6.4 \\
\hline
\end{tabular}
\end{center}
\caption{Quicksort on the 8-core Intel Nehalem system. 
Best (minimum) running time over 10 runs in seconds.
Speedup is calculated relative to the (best) sequential STL implementation.}
\label{tab:sort_cora_min}
\end{table*}

\begin{table*}[t!]
\begin{center}
\begin{tabular}{|lr|rr|rrr||rr|}
\hline
Type & Size & Seq/STL & SeqQS & Fork & SU & Randfork & MMPar & SU \\
\hline
& 10000000 & 1.305 & 1.268 & 0.581 & 2.2 & 1.254 & 0.782 & 1.7 \\
& 100000000 & 14.890 & 14.575 & 3.710 & 4.0 & 11.836 & 3.164 & 4.7 \\
Random & 8388607 & 1.106 & 1.053 & 0.457 & 2.4 & 1.116 & 0.502 & 2.2 \\
& 33554431 & 4.751 & 4.653 & 1.291 & 3.7 & 4.756 & 1.252 & 3.8 \\
& 134217727 & 20.948 & 19.972 & 4.466 & 4.7 & 18.034 & 4.427 & 4.7 \\
\hline
 & 10000000 & 1.283 & 1.260 & 0.503 & 2.6 & 1.341 & 0.647 & 2.0 \\
 & 100000000 & 14.356 & 13.994 & 3.540 & 4.1 & 12.216 & 2.902 & 4.9 \\
Gauss & 8388607 & 1.056 & 1.058 & 0.478 & 2.2 & 1.055 & 0.517 & 2.0 \\
 & 33554431 & 4.734 & 4.503 & 1.342 & 3.5 & 4.381 & 1.799 & 2.6 \\
 & 134217727 & 19.997 & 19.244 & 5.160 & 3.9 & 16.887 & 4.718 & 4.2 \\
\hline
 & 10000000 & 1.291 & 1.212 & 0.488 & 2.6 & 1.272 & 0.821 & 1.6 \\
 & 100000000 & 14.734 & 14.035 & 3.412 & 4.3 & 13.230 & 3.355 & 4.4 \\
Buckets & 8388607 & 1.071 & 0.967 & 0.403 & 2.7 & 1.114 & 0.583 & 1.8 \\
 & 33554431 & 4.670 & 4.515 & 1.266 & 3.7 & 4.265 & 1.497 & 3.1 \\
 & 134217727 & 20.666 & 19.031 & 4.351 & 4.7 & 15.014 & 4.046 & 5.1 \\
\hline
 & 10000000 & 1.187 & 1.306 & 0.631 & 1.9 & 1.350 & 0.828 & 1.4 \\
 & 100000000 & 13.897 & 14.800 & 4.341 & 3.2 & 11.857 & 3.590 & 3.9 \\
Staggered & 8388607 & 1.064 & 1.058 & 0.440 & 2.4 & 1.213 & 0.671 & 1.6 \\
 & 33554431 & 4.597 & 4.631 & 1.216 & 3.8 & 4.775 & 1.611 & 2.9 \\
 & 134217727 & 19.133 & 19.660 & 4.844 & 4.0 & 15.354 & 4.399 & 4.3 \\
\hline
\end{tabular}
\end{center}
\caption{Quicksort on the 16-core AMD Opteron system. 
Average running times over 10 repetitions in seconds.
Speedup is calculated relative to the (best) sequential STL implementation.}
\label{tab:sort_daisy_avg}
\end{table*}

\begin{table*}[t!]
\begin{center}
\begin{tabular}{|lr|rr|rrr||rr|}
\hline
Type & Size & Seq/STL & SeqQS & Fork & SU & Randfork & MMPar & SU \\
\hline
 & 10000000 & 1.305 & 1.267 & 0.536 & 2.4 & 0.929 & 0.676 & 1.9 \\
 & 100000000 & 14.884 & 14.574 & 3.614 & 4.1 & 7.481 & 2.896 & 5.1 \\
Random & 8388607 & 1.106 & 1.052 & 0.423 & 2.6 & 0.608 & 0.436 & 2.5 \\
 & 33554431 & 4.751 & 4.653 & 1.233 & 3.9 & 4.254 & 1.069 & 4.4 \\
 & 134217727 & 20.947 & 19.971 & 4.302 & 4.9 & 10.399 & 4.119 & 5.1 \\
\hline
 & 10000000 & 1.282 & 1.260 & 0.466 & 2.8 & 1.092 & 0.568 & 2.3 \\
 & 100000000 & 14.349 & 13.993 & 3.429 & 4.2 & 9.069 & 2.699 & 5.3 \\
Gauss & 8388607 & 1.056 & 1.058 & 0.407 & 2.6 & 0.621 & 0.406 & 2.6 \\
 & 33554431 & 4.733 & 4.503 & 1.294 & 3.7 & 2.840 & 1.368 & 3.5 \\
 & 134217727 & 19.989 & 19.233 & 4.862 & 4.1 & 9.264 & 4.279 & 4.7 \\
\hline
 & 10000000 & 1.290 & 1.211 & 0.344 & 3.7 & 0.734 & 0.734 & 1.8 \\
 & 100000000 & 14.732 & 14.026 & 3.153 & 4.7 & 8.399 & 3.096 & 4.8 \\
Buckets & 8388607 & 1.071 & 0.967 & 0.355 & 3.0 & 1.102 & 0.531 & 2.0 \\
 & 33554431 & 4.669 & 4.515 & 1.138 & 4.1 & 2.498 & 1.294 & 3.6 \\
 & 134217727 & 20.655 & 19.030 & 3.933 & 5.3 & 9.265 & 3.835 & 5.4 \\
\hline
 & 10000000 & 1.187 & 1.306 & 0.609 & 2.0 & 0.762 & 0.732 & 1.6 \\
 & 100000000 & 13.889 & 14.793 & 3.820 & 3.6 & 6.676 & 3.117 & 4.5 \\
Staggered & 8388607 & 1.063 & 1.058 & 0.399 & 2.7 & 1.182 & 0.575 & 1.8 \\
 & 33554431 & 4.596 & 4.631 & 1.121 & 4.1 & 3.654 & 1.405 & 3.3 \\
 & 134217727 & 19.129 & 19.659 & 4.613 & 4.1 & 10.233 & 3.955 & 4.8 \\
\hline
\end{tabular}
\end{center}
\caption{Quicksort on the 16-core AMD Opteron system. 
Best (minimum) running time of 10 runs in seconds.
Speedup is calculated relative to the (best) sequential STL implementation.}
\label{tab:sort_daisy_min}
\end{table*}

\begin{table*}[t!]
\begin{center}
\begin{tabular}{|lr|rr|rrr|rrr||rr|}
\hline
Type & Size & Seq/STL & SeqQS & Fork & SU & Randfork & Cilk & SU & Cilk\_sample & MMPar & SU \\
\hline
 & 10000000 & 1.479 & 1.620 & 0.388 & 3.8 & 1.818 & 0.207 & 7.1 & 0.206 & 0.246 & 6.0 \\
 & 100000000 & 13.319 & 13.742 & 2.891 & 4.6 & 13.607 & 2.421 & 5.5 & 2.312 & 1.372 & 9.7 \\
Random & 1000000000 & 107.080 & 117.963 & 20.287 & 5.3 & 50.679 & 24.018 & 4.5 & 23.838 & 14.200 & 7.5 \\
 & 8388607 & 1.447 & 1.580 & 0.774 & 1.9 & 1.772 & 0.194 & 7.5 & 0.188 & 0.410 & 3.5 \\
 & 33554431 & 4.863 & 5.265 & 0.903 & 5.4 & 5.690 & 0.657 & 7.4 & 0.641 & 0.587 & 8.3 \\
 & 134217727 & 15.888 & 16.617 & 3.103 & 5.1 & 12.115 & 2.525 & 6.3 & 2.521 & 1.835 & 8.7 \\
\hline
 & 10000000 & 1.252 & 1.354 & 0.275 & 4.6 & 1.621 & 0.175 & 7.1 & 0.175 & 0.174 & 7.2 \\
 & 100000000 & 11.923 & 12.971 & 2.516 & 4.7 & 14.972 & 2.433 & 4.9 & 2.484 & 1.456 & 8.2 \\
Gauss & 1000000000 & 119.464 & 130.255 & 22.288 & 5.4 & 106.658 & 24.641 & 4.8 & 24.789 & 17.397 & 6.9 \\
 & 8388607 & 1.029 & 1.112 & 0.247 & 4.2 & 1.353 & 0.174 & 5.9 & 0.174 & 0.169 & 6.1 \\
 & 33554431 & 4.408 & 4.236 & 0.870 & 5.1 & 5.492 & 0.734 & 6.0 & 0.712 & 0.543 & 8.1 \\
 & 134217727 & 15.888 & 17.263 & 2.771 & 5.7 & 19.774 & 2.479 & 6.4 & 2.530 & 1.763 & 9.0 \\
\hline
 & 10000000 & 1.131 & 1.233 & 0.241 & 4.7 & 1.517 & 0.134 & 8.4 & 0.142 & 0.181 & 6.2 \\
 & 100000000 & 12.373 & 12.801 & 1.818 & 6.8 & 15.136 & 1.080 & 11.5 & 1.094 & 1.416 & 8.7 \\
Buckets & 1000000000 & 122.822 & 135.833 & 19.214 & 6.4 & 121.967 & 16.566 & 7.4 & 17.721 & 15.072 & 8.1 \\
 & 8388607 & 0.969 & 1.057 & 0.186 & 5.2 & 1.244 & 0.077 & 12.5 & 0.083 & 0.169 & 5.7 \\
 & 33554431 & 4.111 & 4.505 & 0.662 & 6.2 & 4.774 & 0.518 & 7.9 & 0.560 & 0.516 & 8.0 \\
 & 134217727 & 16.484 & 17.154 & 2.038 & 8.1 & 17.203 & 1.844 & 8.9 & 2.005 & 1.787 & 9.2 \\
\hline
 & 10000000 & 1.151 & 1.301 & 0.279 & 4.1 & 1.509 & 0.396 & 2.9 & 0.431 & 0.182 & 6.3 \\
 & 100000000 & 12.181 & 12.498 & 1.618 & 7.5 & 14.449 & 4.109 & 3.0 & 4.295 & 1.470 & 8.3 \\
Staggered & 1000000000 & 116.734 & 131.596 & 20.067 & 5.8 & 100.270 & 78.455 & 1.5 & 83.268 & 23.365 & 5.0 \\
 & 8388607 & 0.971 & 1.140 & 0.339 & 2.9 & 1.330 & 0.371 & 2.6 & 0.397 & 0.191 & 5.1 \\
 & 33554431 & 4.116 & 4.527 & 0.623 & 6.6 & 5.042 & 1.014 & 4.1 & 1.111 & 0.486 & 8.5 \\
 & 134217727 & 16.281 & 16.941 & 2.299 & 7.1 & 17.563 & 2.146 & 7.6 & 2.243 & 1.904 & 8.5 \\
\hline
\end{tabular}
\end{center}
\caption{Quicksort on the 32-core Intel Nehalem EX system. 
Average running timesover 10 repetitions in seconds.
Speedup is calculated relative to the (best) sequential STL implementation.}
\label{tab:sort_meret_avg}
\end{table*}

\begin{table*}[t!]
\begin{center}
\begin{tabular}{|lr|rr|rrr|rrr||rr|}
\hline
Type & Size & Seq/STL & SeqQS & Fork & SU & Randfork & Cilk & SU & Cilk\_sample & MMPar & SU \\
\hline
 & 10000000 & 1.222 & 1.341 & 0.292 & 4.2 & 1.200 & 0.192 & 6.4 & 0.197 & 0.187 & 6.6 \\
 & 100000000 & 13.232 & 13.492 & 2.585 & 5.1 & 8.442 & 2.252 & 5.9 & 2.073 & 1.081 & 12.2 \\
Random & 1000000000 & 131.266 & 144.100 & 24.698 & 5.3 & 41.073 & 23.345 & 5.6 & 23.205 & 11.121 & 11.8 \\
 & 8388607 & 1.016 & 1.113 & 0.268 & 3.8 & 1.126 & 0.168 & 6.0 & 0.157 & 0.144 & 7.1 \\
 & 33554431 & 4.401 & 4.745 & 0.775 & 5.7 & 3.890 & 0.581 & 7.6 & 0.626 & 0.473 & 9.3 \\
 & 134217727 & 17.537 & 18.405 & 3.249 & 5.4 & 8.504 & 2.390 & 7.3 & 2.366 & 1.513 & 11.6 \\
\hline
 & 10000000 & 1.234 & 1.331 & 0.255 & 4.8 & 1.481 & 0.171 & 7.2 & 0.171 & 0.157 & 7.9 \\
 & 100000000 & 11.861 & 12.945 & 2.454 & 4.8 & 9.870 & 2.372 & 5.0 & 2.449 & 1.343 & 8.8 \\
Gauss & 1000000000 & 119.297 & 129.859 & 22.149 & 5.4 & 87.425 & 23.750 & 5.0 & 23.345 & 13.962 & 8.5 \\
 & 8388607 & 1.027 & 1.105 & 0.227 & 4.5 & 1.045 & 0.166 & 6.2 & 0.166 & 0.155 & 6.6 \\
 & 33554431 & 4.407 & 4.197 & 0.837 & 5.3 & 4.592 & 0.713 & 6.2 & 0.672 & 0.483 & 9.1 \\
 & 134217727 & 15.824 & 17.178 & 2.704 & 5.9 & 13.898 & 2.452 & 6.5 & 2.445 & 1.679 & 9.4 \\
\hline
 & 10000000 & 1.131 & 1.229 & 0.213 & 5.3 & 1.386 & 0.129 & 8.8 & 0.138 & 0.163 & 6.9 \\
 & 100000000 & 12.350 & 12.771 & 1.777 & 6.9 & 10.155 & 1.018 & 12.1 & 1.056 & 1.330 & 9.3 \\
 & 1000000000 & 122.627 & 135.454 & 18.904 & 6.5 & 92.137 & 15.295 & 8.0 & 17.066 & 14.109 & 8.7 \\
Buckets & 8388607 & 0.927 & 1.010 & 0.171 & 5.4 & 1.104 & 0.070 & 13.3 & 0.071 & 0.139 & 6.7 \\
 & 33554431 & 4.109 & 4.493 & 0.621 & 6.6 & 4.041 & 0.490 & 8.4 & 0.525 & 0.471 & 8.7 \\
 & 134217727 & 16.429 & 17.091 & 1.960 & 8.4 & 12.782 & 1.780 & 9.2 & 1.903 & 1.639 & 10.0 \\
\hline
 & 10000000 & 1.141 & 1.287 & 0.247 & 4.6 & 0.840 & 0.378 & 3.0 & 0.413 & 0.172 & 6.6 \\
 & 100000000 & 12.150 & 12.460 & 1.574 & 7.7 & 10.318 & 4.007 & 3.0 & 4.173 & 1.309 & 9.3 \\
Staggered & 1000000000 & 115.845 & 131.236 & 19.672 & 5.9 & 78.962 & 77.297 & 1.5 & 81.088 & 17.095 & 6.8 \\
 & 8388607 & 0.963 & 1.126 & 0.322 & 3.0 & 0.934 & 0.360 & 2.7 & 0.370 & 0.161 & 6.0 \\
 & 33554431 & 4.111 & 4.512 & 0.569 & 7.2 & 2.774 & 0.938 & 4.4 & 1.044 & 0.452 & 9.1 \\
 & 134217727 & 16.217 & 16.838 & 2.230 & 7.3 & 12.705 & 2.056 & 7.9 & 2.127 & 1.705 & 9.5 \\

\hline
\end{tabular}
\end{center}
\caption{Quicksort on the 32-core Intel Nehalem EX system. 
Best (minimum) running time over 10 runs in seconds.
Speedup is calculated relative to the (best) sequential STL implementation.}
\label{tab:sort_meret_min}
\end{table*}

\begin{table*}[t!]
\begin{center}
\begin{tabular}{|lr|rr|rrr||rr|}
\hline
Type & Size & Seq/STL & SeqQS & Fork & SU & Randfork & MMPar & SU \\
\hline
 & 10000000 & 4.541 & 5.449 & 2.128 & 2.1 & 5.036 & 1.464 & 3.1 \\
 & 100000000 & 54.208 & 64.659 & 14.672 & 3.7 & 38.660 & 6.385 & 8.5 \\
Random & 8388607 & 3.718 & 4.441 & 1.509 & 2.5 & 4.548 & 1.094 & 3.4 \\
 & 33554431 & 16.427 & 20.167 & 5.189 & 3.2 & 17.693 & 3.502 & 4.7 \\
 & 134217727 & 75.126 & 86.858 & 16.198 & 4.6 & 45.664 & 10.849 & 6.9 \\
\hline
 & 10000000 & 4.474 & 5.237 & 1.766 & 2.5 & 5.337 & 1.267 & 3.5 \\
 & 100000000 & 52.630 & 62.650 & 13.144 & 4.0 & 37.754 & 5.235 & 10.1 \\
Gauss & 8388607 & 3.552 & 4.545 & 1.578 & 2.3 & 4.094 & 1.149 & 3.1 \\
 & 33554431 & 16.590 & 19.514 & 5.481 & 3.0 & 14.815 & 3.344 & 5.0 \\
 & 134217727 & 72.759 & 90.817 & 23.120 & 3.1 & 56.062 & 9.452 & 7.7 \\
\hline
 & 10000000 & 4.787 & 5.728 & 2.288 & 2.1 & 5.616 & 1.412 & 3.4 \\
 & 100000000 & 56.710 & 67.763 & 16.825 & 3.4 & 41.877 & 7.653 & 7.4 \\
Buckets & 8388607 & 3.807 & 4.516 & 1.439 & 2.6 & 4.404 & 1.220 & 3.1 \\
 & 33554431 & 17.371 & 20.607 & 5.487 & 3.2 & 16.907 & 3.335 & 5.2 \\
 & 134217727 & 76.133 & 91.296 & 21.056 & 3.6 & 68.279 & 11.717 & 6.5 \\
\hline
 & 10000000 & 4.315 & 7.052 & 3.538 & 1.2 & 6.790 & 2.021 & 2.1 \\
 & 100000000 & 52.795 & 79.495 & 27.864 & 1.9 & 50.690 & 8.334 & 6.3 \\
Staggered & 8388607 & 3.570 & 5.376 & 2.037 & 1.8 & 5.439 & 1.438 & 2.5 \\
 & 33554431 & 16.762 & 21.383 & 5.872 & 2.9 & 17.774 & 3.488 & 4.8 \\
 & 134217727 & 71.398 & 102.328 & 31.826 & 2.2 & 56.209 & 8.327 & 8.6 \\
\hline
\end{tabular}
\end{center}
\caption{Quicksort on the 16-core Sun T2+ system running with 32 threads. 
Average running times over 10 repetitions in seconds.
Speedup is calculated relative to the (best) sequential STL implementation.}
\label{tab:sort_t2_32_avg}
\end{table*}

\begin{table*}[t!]
\begin{center}
\begin{tabular}{|lr|rr|rrr||rr|}
\hline
Type & Size & Seq/STL & SeqQS & Fork & SU & Randfork & MMPar & SU \\
\hline
 & 10000000 & 4.526 & 5.440 & 2.025 & 2.2 & 3.031 & 1.252 & 3.6 \\
 & 100000000 & 53.822 & 64.124 & 13.802 & 3.9 & 20.924 & 4.996 & 10.8 \\
Random & 8388607 & 3.698 & 4.418 & 1.355 & 2.7 & 3.055 & 0.753 & 4.9 \\
 & 33554431 & 16.381 & 20.112 & 4.972 & 3.3 & 9.137 & 2.399 & 6.8 \\
 & 134217727 & 74.520 & 86.550 & 15.444 & 4.8 & 33.778 & 8.263 & 9.0 \\
\hline
 & 10000000 & 4.433 & 5.222 & 1.565 & 2.8 & 4.171 & 1.127 & 3.9 \\
 & 100000000 & 52.613 & 62.621 & 12.395 & 4.2 & 18.303 & 4.021 & 13.1 \\
Gauss & 8388607 & 3.543 & 4.532 & 1.427 & 2.5 & 2.881 & 0.976 & 3.6 \\
 & 33554431 & 16.575 & 19.503 & 5.116 & 3.2 & 10.862 & 2.686 & 6.2 \\
 & 134217727 & 72.733 & 90.591 & 21.745 & 3.3 & 41.236 & 7.499 & 9.7 \\
\hline
 & 10000000 & 4.772 & 5.712 & 2.139 & 2.2 & 3.138 & 1.182 & 4.0 \\
 & 100000000 & 56.330 & 67.388 & 15.747 & 3.6 & 23.379 & 5.964 & 9.4 \\
Buckets & 8388607 & 3.802 & 4.511 & 1.313 & 2.9 & 2.546 & 1.141 & 3.3 \\
 & 33554431 & 17.350 & 20.469 & 4.917 & 3.5 & 11.350 & 2.799 & 6.2 \\
 & 134217727 & 76.076 & 91.170 & 20.139 & 3.8 & 42.345 & 8.646 & 8.8 \\
\hline
 & 10000000 & 4.278 & 7.003 & 3.424 & 1.2 & 4.547 & 1.642 & 2.6 \\
 & 100000000 & 52.771 & 79.305 & 25.658 & 2.1 & 42.324 & 7.288 & 7.2 \\
Staggered & 8388607 & 3.565 & 5.363 & 1.903 & 1.9 & 2.879 & 1.273 & 2.8 \\
 & 33554431 & 16.726 & 21.325 & 5.579 & 3.0 & 8.850 & 2.535 & 6.6 \\
 & 134217727 & 71.388 & 102.194 & 30.732 & 2.3 & 40.431 & 6.936 & 10.3 \\
\hline
\end{tabular}
\end{center}
\caption{Quicksort on the 16-core Sun T2+ system running with 32 threads. 
Best (minimum) running time over 10 runs in seconds.}
\label{tab:sort_t2_32_min}
\end{table*}

\begin{table*}[t!]
\begin{center}
\begin{tabular}{|lr|rr|rrr||rr|}
\hline
Type & Size & Seq/STL & SeqQS & Fork & SU & Randfork & MMPar & SU \\
\hline
 & 10000000 & 4.542 & 5.449 & 2.118 & 2.1 & 5.761 & 1.505 & 3.0 \\
 & 100000000 & 53.877 & 64.226 & 14.608 & 3.7 & 44.514 & 8.583 & 6.3 \\
Random & 8388607 & 3.704 & 4.425 & 1.455 & 2.5 & 4.649 & 1.103 & 3.4 \\
 & 33554431 & 16.426 & 20.168 & 4.827 & 3.4 & 19.653 & 2.669 & 6.2 \\
 & 134217727 & 74.590 & 86.664 & 18.152 & 4.1 & 66.932 & 10.323 & 7.2 \\
\hline
 & 10000000 & 4.439 & 5.230 & 1.589 & 2.8 & 5.378 & 1.370 & 3.2 \\
 & 100000000 & 52.634 & 62.659 & 12.912 & 4.1 & 50.805 & 5.321 & 9.9 \\
Gauss & 8388607 & 3.550 & 4.536 & 1.534 & 2.3 & 5.220 & 1.072 & 3.3 \\
 & 33554431 & 16.584 & 19.630 & 5.163 & 3.2 & 18.954 & 3.212 & 5.2 \\
\hline
 & 10000000 & 4.786 & 5.653 & 2.002 & 2.4 & 5.879 & 1.393 & 3.4 \\
 & 100000000 & 57.969 & 68.505 & 17.470 & 3.3 & 53.243 & 8.226 & 7.0 \\
Buckets & 8388607 & 3.860 & 4.628 & 1.545 & 2.5 & 4.920 & 1.075 & 3.6 \\
 & 33554431 & 17.131 & 20.759 & 5.128 & 3.3 & 20.554 & 3.104 & 5.5 \\
 & 134217727 & 77.244 & 91.977 & 21.168 & 3.6 & 70.394 & 10.868 & 7.1 \\
\hline
 & 10000000 & 4.223 & 10.144 & 7.348 & 0.6 & 12.085 & 2.755 & 1.5 \\
 & 100000000 & 51.521 & 97.713 & 54.925 & 0.9 & 84.106 & 15.196 & 3.4 \\
Staggered & 8388607 & 3.713 & 6.778 & 3.117 & 1.2 & 6.922 & 1.915 & 1.9 \\
 & 33554431 & 16.565 & 27.185 & 9.273 & 1.8 & 21.357 & 5.709 & 2.9 \\
 & 134217727 & 71.417 & 174.611 & 78.126 & 0.9 & 123.443 & 29.019 & 2.5 \\
\hline
\end{tabular}
\end{center}
\caption{Quicksort on the 16-core Sun T2+ system running with 64 threads. 
Average running times over 10 runs in seconds.
Speedup is calculated relative to the (best) sequential STL implementation.}
\label{tab:sort_t2_64_avg}
\end{table*}

\begin{table*}[t!]
\begin{center}
\begin{tabular}{|lr|rr|rrr||rr|}
\hline
Type & Size & Seq/STL & SeqQS & Fork & SU & Randfork & MMPar & SU \\
\hline
 & 10000000 & 4.528 & 5.440 & 1.723 & 2.6 & 4.606 & 1.359 & 3.3 \\
 & 100000000 & 53.850 & 64.167 & 13.290 & 4.1 & 25.881 & 7.329 & 7.3 \\
Random & 8388607 & 3.697 & 4.417 & 1.335 & 2.8 & 3.102 & 1.042 & 3.5 \\
 & 33554431 & 16.382 & 20.113 & 4.504 & 3.6 & 10.952 & 2.306 & 7.1 \\
 & 134217727 & 74.554 & 86.591 & 16.489 & 4.5 & 38.155 & 8.517 & 8.8 \\
\hline
 & 10000000 & 4.432 & 5.222 & 1.475 & 3.0 & 4.149 & 1.252 & 3.5 \\
 & 100000000 & 52.619 & 62.622 & 12.399 & 4.2 & 32.363 & 3.967 & 13.3 \\
Gauss & 8388607 & 3.541 & 4.530 & 1.356 & 2.6 & 4.983 & 0.931 & 3.8 \\
 & 33554431 & 16.557 & 19.497 & 4.545 & 3.6 & 12.489 & 2.747 & 6.0 \\
\hline
 & 10000000 & 4.760 & 5.625 & 1.912 & 2.5 & 4.390 & 1.287 & 3.7 \\
 & 100000000 & 57.729 & 68.233 & 15.612 & 3.7 & 31.106 & 6.472 & 8.9 \\
Buckets & 8388607 & 3.848 & 4.621 & 1.355 & 2.8 & 2.966 & 0.985 & 3.9 \\
 & 33554431 & 17.122 & 20.749 & 4.799 & 3.6 & 16.560 & 2.683 & 6.4 \\
 & 134217727 & 77.210 & 91.730 & 20.117 & 3.8 & 43.873 & 9.867 & 7.8 \\
\hline
 & 10000000 & 4.216 & 10.131 & 7.015 & 0.6 & 11.826 & 2.367 & 1.8 \\
 & 100000000 & 51.499 & 97.481 & 52.498 & 1.0 & 68.436 & 13.338 & 3.9 \\
Staggered & 8388607 & 3.702 & 6.768 & 2.823 & 1.3 & 4.519 & 1.654 & 2.2 \\
 & 33554431 & 16.550 & 27.164 & 8.826 & 1.9 & 11.219 & 4.665 & 3.5 \\
 & 134217727 & 71.394 & 174.580 & 74.032 & 1.0 & 93.938 & 24.522 & 2.9 \\
\hline
\end{tabular}
\end{center}
\caption{Quicksort on the 16-core Sun T2+ system running with 64 threads. 
best (minimum) running time over 10 runs in seconds.
Speedup is calculated relative to the (best) sequential STL implementation.}
\label{tab:sort_t2_64_min}
\end{table*}

To evaluate the mixed-parallelism work-stealer we have implemented the
parallel Quicksort algorithm described in~\cite{TsigasZhang03} with
the variations described above. We compare this implementation to the standard
task-based Quicksort algorithm (Algorithm~\ref{alg:stqsort}). The
standard algorithm sequentially partitions the data and then
recursively sorts both created subsequences in parallel. The
\emph{async} statement we use here creates a task out of the following
function call. The \emph{sync} statement waits for all spawned
tasks. We provide a CUTOFF length at which we switch to a sequential
implementation when the task-creation overhead is higher than the
gains.

\begin{algorithm}
\caption{qsort(data, $n$)}\label{alg:stqsort}
\begin{algorithmic}[1]
\IF {$n \leq $ CUTOFF}
	\RETURN sequential\_sort(data, $n$)
\ELSE
	\STATE pivot $\gets$ partition(data, $n$)
	\STATE async qsort(data, pivot)
	\STATE async qsort(data $+$ pivot $+ 1$, pivot $- n - 1$)
	\STATE sync
\ENDIF
\end{algorithmic}
\end{algorithm}

The problem with this algorithm is that during the starting phase we
start with a single sequence that has to be sorted on a single
processor. Only over time we get enough parallel work to fully utilize
all processor resources. In~\cite{TsigasZhang03} this problem is
solved with a data-parallel partitioning step. It starts off with all
processors partitioning a single array. Then, after partitioning is
complete, the processors are split into two groups, where each group
gets a single subsequence to work on. In the final phase, each
processor has a single subsequence that it can sort locally. To
achieve better load-balancing, a helping scheme similar to
work-stealing is used. Therefore, the last phase can be seen as
similar to the task-based Quicksort algorithm in
Algorithm~\ref{alg:stqsort}.

As classic work-stealing is not able to handle data-parallel tasks,
the implementation of Quicksort with data-parallel partitioning has to
rely on manual scheduling and a manually implemented helping
scheme. Our mixed-parallelism work-stealer fits naturally to this
algorithm, and allows to simplify it. Also, it provides better
balancing if other algorithms are executed at the same time, as both
can use the same scheduler. We modify the Quicksort algorithm to use a
more dynamic scheme, which we present in
Algorithm~\ref{alg:mmqsort}. This mixed-mode parallel Quicksort uses a
data-parallel partitioning step, and then launches two subtasks on the
thread with the local id $0$. We modified the \emph{async} to allow
setting the number of threads required by the given task. In this
example, we delegate the task of choosing a good number of threads to
the procedure getBestNp(n). How it is actually implemented may have a
major influence on performance as the overhead for data-parallel
partitioning is higher than for sequential partitioning, so it should
only be used when either the data is large enough so that the overhead
is negligible or there is too little work to do for sequential
tasks. In our implementation we decided on a policy that each thread
working on parallel partitioning should at least have 128 blocks to
work on. To achieve better balancing, we decided to only allow powers
of two as the number of threads for a task.

If the number of threads required by a newly launched task $np$ equals
1, we switch to the standard task-based implementation from
Algorithm~\ref{alg:stqsort}.

\begin{algorithm}
\caption{mmqsort(data, $n$)}\label{alg:mmqsort}
\begin{algorithmic}[1]
\IF {$np = 1$}
	\RETURN qsort(data, $n$)
\ELSE
	\STATE pivot $\gets$ parallel\_partition(data, $n$)
	\IF {localId $=0$ }
		\STATE async(getBestNp(pivot)) mmqsort(data, pivot)
		\STATE async(getBestNp($(n -$ pivot $- 1)/$)) mmqsort(data $+$ pivot $+ 1$, $n -$ pivot $- 1$)
		\STATE sync
	\ENDIF
\ENDIF
\end{algorithmic}
\end{algorithm}

We now explain how the data-parallel partitioning step works. During
partitioning, the array is split into equally sized, cache-aligned
blocks. (The pivot element should not be inside those blocks.) Each
thread takes one block from each side of the array to be sorted, and
tries to \emph{neutralize} (see~\cite{TsigasZhang03} for the details
of this concept) blocks by swapping elements that are larger than the
pivot and in the left block with elements that are smaller than the
pivot and in the right block. As soon as one of the blocks has been
neutralized, the thread tries to acquire another block from the same
side of the array, until we run out of free blocks.

For the second phase, the paper~\cite{TsigasZhang03} proposes that a
single thread then collects the remaining blocks from all other
threads, and processes them sequentially. We decided to follow a
different approach. In our implementation, any thread that needs to
acquire a block decides whether it wants to be a producer, or a
consumer, depending on its current id and the number of blocks on this
side that have to be processed. Producing threads put their remaining
block, and the current processing position into an exchanger
data-structure, and then exit the computation. Consuming threads
retrieve blocks from the exchanger data-structure and continue to
neutralize blocks. During this execution more and more threads switch
from being a consumer to being a producer, until only thread 0
remains.

The third phase starts, when thread 0 only has blocks from one side
remaining. As we now have a sequential execution, we can use a
variation of the sequential partitioner to partition the rest of the
data. Apart from that, our algorithm still uses fork-join parallelism
for its execution, meaning that for each subsequence to sort, a
separate task is created. For the number of threads assigned for each
subtask, we decided to select the biggest power of two, where each
thread can process at least 128 blocks on average during the
partitioning step (of course limited by the number of hardware
threads). If only one thread would process the array, we switch to the
classic fork-join Quicksort implementation with a sequential
partitioning step.

The classic fork-join Quicksort implementation has been designed to
run on the same scheduler. It contains a sequential partitioning step,
and creates a new task for each of the resulting subsequences. If a
subsequence is smaller than a certain size, we switch to the standard
sequential STL sorting algorithm.

Tunable parameters of the Quicksort algorithm are the following:

\begin{itemize}
\item 
Blocksize for parallel partitioning - The Blocksize for parallel
partitioning should be at least as large as the cache-line size. We
decided on a block-size of 4096. (We sorted 4-byte \emph{int} values.)
\item 
Number of threads for the data-parallel partitioning step - In
our implementation, a thread should be able to process at least 16
blocks on average. We only allow powers of two for the number of
threads. Other values might provide better results.
\item 
Cutoff for task-based Quicksort - We decided to let all
subsequences with less than 512 elements be sorted by STL sort.
\end{itemize}

We did not concentrate on finding the best values for those parameters
(or the tuning parameters of the work-stealing scheduler), therefore
performance might be improved using different values.

We compare the mixed-mode parallel Quicksort to a standard task-based
Quicksort implementation. Both are run on our scheduler. We also
implemented a work-stealing scheduler with random stealing and
executed the standard task-based Quicksort on it.  We were not able to
achieve good performance with this version. It seems that random
work-stealing is much more sensible to tuning-parameters, and requires
some more tricks to work well.  Where possible, we also compared to a
task based implementation implemented in
Cilk~\cite{BlumofeJoergKuszmaulLeisersonRandallZhou96}.

Speed-up is in all cases computed relative to the best available
sequential sort implementation which we take to be the STL sort
function. This is also used in our implementation for subsequences
shorter than 512 elements. In the current version of the STL delivered
with \texttt{gcc}, the Introsort algorithm is used that is based on
Quicksort, but has a better worst-case complexity. For each variant,
we took the average and the best (minimum) result out of 10
measurements. We sorted differently generated sequences of 4-Byte
integers distributed as in the
papers~\cite{HelmanBaderJaja98,TsigasZhang03}, namely uniformly
random, random Gaussian, and Buckets and Staggered.

The implementations have been run on four different systems, namely
\begin{itemize}
\item
a 2-socket Intel Nehalem system, where each CPU has 4 cores 
(Intel Xeon X5550 2.66Ghz, 8MB cache).  
\item
a 8-socket AMD Opteron system, where each CPU has 2 cores
(AMD Opteron 8218 2.6 GHz)
\item
a 4-socket Intel Xeon X7560 system, where each CPU has 8 cores
(Intel Xeon X7560 2.26 GHz, 24MB cache)
\item
a 2-socket Sun UltraSPARC T2+ system, where each CPU has 8 cores and 64 hardware threads
\end{itemize}
The collected results are shown in Tables~\ref{tab:sort_cora_avg},
\ref{tab:sort_cora_min}, \ref{tab:sort_daisy_avg},
\ref{tab:sort_daisy_min}, \ref{tab:sort_meret_avg},
\ref{tab:sort_meret_min}, \ref{tab:sort_t2_32_avg},
\ref{tab:sort_t2_32_min}, \ref{tab:sort_t2_64_avg}, and
\ref{tab:sort_t2_64_min}. In these tables, colums Seq/STL list running
times for the ``best'' sequential implementation available (STL),
while columns SeqQS give the running times for the handwritten
reference Quicksort implementation that uses the same cutoff to switch
to STL sort, as the parallel implementations.  Columns Fork are the
running times with a standard, task-based parallel Quicksort
implementation using our work-stealer (all tasks have thread
requirement 1); here both a deterministic and a randomized variant of
the work-stealer have been used.  Columns Cilk give the running times
using Cilk++, wherever Cilk++ could be run (this was not possible on the
Solaris systems). Cilk\_sample denotes the sample quicksort
implementation provided with the Cilk++ compiler, whereas Cilk is a
handwritten example following the same pattern as the other
implementations, including the cutoff. Finally, columns MMPar are our
mixed-mode parallel algorithm shown as Algorithm~\ref{alg:mmqsort}.

Compared to the task-based Quicksort, our mixed mode implementation on
top to the new work-stealing scheduler improves speed-up, often by a
significant fraction; most notably for the Sun T2+ and the large
32-core Intel system. Randomization in the task-based implementation
was tried but turned out to perform poorly, illustrating again that
tuning is important in getting the best performance from a
work-stealing system.  Compared to Cilk, in most cases performance is
comparable, sometimes better, but for the 8-core Nehalem system Cilk
gives systematically better speed-up. This could be due to the fact
that Cilk is more carefully tuned than our prototype system. On the
32-core Nehalem system we achieve consistently better results than even with
Cilk.  On the Sun T2+ system, low speed-up is achieved with 64
threads, while it is quite competitive for 32 threads.  It seems that
the cores are already well utilized with this algorithm when using
2-way SMT, so that nothing can be gained when using more hardware
threads.

\section{Conclusion}

We showed how to extend standard work-stealing to deal with mixed-mode
task and more tightly coupled data parallel programs, in which
dynamically spawned tasks can have fixed requirements for a number
(larger than one) of threads for their execution. We concentrated on
explaining the basic algorithm, which we termed \emph{work-stealing
with deterministic team-building}, and outlined a number of variations
and tunable parameters. A prototype implementation of a such a
work-stealer was given in C++, and used as the basis for implementing
a parallel Quicksort algorithm. On four different many-core systems
with 8 to 32 cores we showed that speed-up could be
improved from $4.8$ using the standard task-based algorithm to $5.6$
using our mixed-mode Quicksort, with an arguably more natural
implementation than in the classic data-parallel
Quicksort~\cite{TsigasZhang03}. 

In future work we will evaluate further mixed-mode parallel
applications, and continue to improve the work-stealing
implementation, including additional ways of improving processor
utilization in cases where the number of threads per task and the
number of processors is not a power of two. One way to do this might
be to allow tasks that are malleable within certain limits. We also
hope to explore the theoretical properties of work-stealing with
deterministic team-building and to explore bounds on the time that
threads may be idle compared to other mixed-mode scheduling
approaches. Eventually we would like to experiment with the approach within
the overall PEPPHER framework.

\end{document}